\theoremstyle{plain}
\newtheorem{theorem}{Theorem}%[section]
\newtheorem{lemma}[theorem]{Lemma}
\theoremstyle{definition}
\newtheorem*{example}{Example}
\theoremstyle{remark}
\DeclareMathOperator{\Tr}{Tr}
\DeclareMathOperator{\Ker}{Ker}
\DeclareMathOperator{\Span}{Span}
\def\bq{\begin{eqnarray}}
\def\eq{\end{eqnarray}}
\def\bqq{\begin{align*}}
\def\eqq{\end{align*}}
\def\nn{\nonumber}
\def\eps{\varepsilon}
\def\wto{\rightharpoonup}
\renewcommand{\epsilon}{\varepsilon}
\newcommand\1{{\ensuremath {\mathds 1} }}
\def\bU{\mathbb{U}}
\def\cF {\mathcal{F}}
\def\cV {\mathcal{V}}
\def\R {\mathbb{R}}
\def\C {\mathbb{C}}
\def\U {\mathbb{U}}
\def\cU {\mathcal{U}}
\def\cJ {\mathcal{J}}
\def\cA{\mathcal{A}}
\def\F {\mathcal{F}}
\def\gH{\mathfrak{h}}
\def\h{\gH}
\def\cV {\mathcal{V}}
\def\R {\mathbb{R}}
\def\C {\mathbb{C}}
\def\U {\mathbb{U}}
\def\J {\mathcal{J}}
\def\cJ {\mathcal{J}}
\def\S {\mathcal{S}}
\def\cA{\mathcal{A}}
\newcommand{\bH}{\mathbb{H}}
\newcommand{\dGamma}{{\ensuremath{\rm d}\Gamma}}
\title[Diagonalization of quadratic Hamiltonians]{Diagonalization of bosonic quadratic Hamiltonians by Bogoliubov transformations}
\author[P.T. Nam]{Phan Th\`anh Nam}
\address{Institute of Science and Technology Austria, Am Campus 1, 3400 Klosterneuburg, Austria} 
\email{pnam@ist.ac.at}
\author[M. Napi\'orkowski]{Marcin Napi\'orkowski}
\address{Institute of Science and Technology Austria, Am Campus 1, 3400 Klosterneuburg, Austria} 
\email{marcin.napiorkowski@ist.ac.at}
\author[J.~P. Solovej]{Jan Philip Solovej}
\address{Department of Mathematical Sciences, University of Copenhagen, Universitetsparken 5, DK-2100 Copenhagen \O, Denmark} 
\email{solovej@math.ku.dk}
\begin{document}

\begin{abstract} We provide general conditions for which bosonic quadratic Hamiltonians on Fock spaces can be diagonalized by Bogoliubov transformations. Our results cover the case when quantum systems have infinite degrees of freedom and the associated one-body kinetic and paring operators are unbounded. Our sufficient conditions are  optimal in the sense that they become necessary when the relevant one-body operators commute.    
\end{abstract}

\date{\today}

\maketitle

\section{Introduction}

We consider Hamiltonians on Fock space which are quadratic in terms of bosonic creation and annihilation operators. In many cases the quadratic Hamiltonians can be diagonalized by Bogoliubov transformations, namely they can be transformed to those of noninteracting particles by a special class of unitary operators which preserve the CCR algebra. The aim of our present work is to 
give rigorous conditions for which the diagonalization can be carried out for quantum systems of infinite degrees of freedom where the kinetic and paring operators are unbounded. 

\subsection{Quadratic Hamiltonian}  
Let us introduce the mathematical setting. Our one-body Hilbert space $\h$ is a complex separable Hilbert space with inner product $\langle.,.\rangle$ which is linear in the second variable and anti-linear in the first. In the grand canonical ensemble the number of particles is not fixed and it is natural to introduce the (bosonic) Fock space  
$$\F(\h):=\bigoplus_{N=0}^\infty \bigotimes_{\text{sym}}^N \h = \C \oplus \gH \oplus \left(\gH\otimes_s \gH \right) \oplus \cdots$$

Noninteracting systems are described by the Hamiltonians of the form 
$$
\dGamma(h):= \bigoplus_{N=0}^\infty (\sum_{j=1}^N h_j) =  0 \oplus h \oplus (h\otimes 1 + 1 \otimes h) \oplus \cdots  
$$
on the Fock space, where $h>0$ is a self-adjoint operator on $\gH$. Although we will work in an abstract setting, the reader may keep in mind the typical example that $h=-\Delta+V(x)$ on $\gH=L^2(\R^d)$, where $V$ is an external potential which serves to bind the particles. 
The operator $\dGamma(h)$ is well-defined on the core 
$$
\bigcup_{M\ge 0} \bigoplus_{n=0}^M  \bigotimes_{\rm sym}^n D(h)
$$
and it can be extended to a positive self-adjoint operator on Fock space by Friedrichs' extension. The spectrum of $\dGamma(h)$ is nothing but the closure of the finite sums of elements of the spectrum of $h$. In particular, the spectrum of the particle number operator $\mathcal{N}:=\dGamma(1)$ is $\{0,1,2,...\}$.

In many physical situations the interaction between particles plays a crucial role and it
complicates the picture dramatically. In principle, solving interacting systems exactly is mostly unrealistic and certain approximations are necessary. In the celebrated 1947 paper \cite{Bogoliubov-47b}, Bogoliubov introduced an approximation theory for a weakly interacting Bose gas where the many-body system is effectively described by a {\em quadratic Hamiltonian} on Fock space, which will be described below. We refer to the book \cite{LieSeiSolYng-05} for a pedagogical introduction to Bogoliubov's approximation. Bogoliubov's theory has been justified rigorously in various situations including the ground state energy of one and two-component Bose gases~\cite{LieSol-01,LieSol-04,Solovej-06}, the Lee-Huang-Yang formula of dilute gases~\cite{ErdSchYau-08,GiuSei-09,YauYin-09} and the excitation spectrum in the mean-field limit  \cite{Seiringer-11,GreSei-13,LewNamSerSol-15,DerNap-13,NamSei-14}. 

To describe quadratic Hamiltonians on bosonic Fock space, we introduce the {\em creation} and {\em annihilation} operators. For any vector $f \in \gH$, the creation operator $a^*(f)$ and the annihilation operator $a(f)$ are defined by the following actions  
\begin{align*} % \label{eq:def-creation-operator}
{a^*}({f_{N+1}})\left( {\sum\limits_{\sigma  \in {S_{N}}} {{f_{\sigma (1)}} } \otimes ... \otimes {f_{\sigma (N)}}} \right) = \frac{1}{\sqrt {N+1}} \sum\limits_{\sigma  \in {S_{N+1}}} {{f_{\sigma (1)}} }  \otimes ... \otimes {f_{\sigma (N+1)}},
\end{align*}
\begin{align*} %\label{eq:def-annihilation-operator}
  a(f_{N+1})\left( {\sum\limits_{\sigma  \in {S_N}} {{f_{\sigma (1)}} } \otimes ... \otimes {f_{\sigma (N)}}} \right) = \sqrt N \sum\limits_{\sigma  \in {S_N}} {\left\langle {f_{N+1},{f_{\sigma (1)}}} \right\rangle } {f_{\sigma (2)}} \otimes ... \otimes {f_{\sigma (N)}},
\end{align*}
for all $f_1,...,f_{N+1}$ in $\gH$, and all $N=0,1,2,...$. These operators satisfy the canonical commutation relations (CCR)
\bq \label{eq:CCR}
[a(f),a(g)]=0, \quad [a^*(f),a^*(g)]=0, \quad [a(f),a^*(g)]=\langle f, g\rangle, \quad \forall f,g\in \gH.
\eq
In particular, for every $f \in \gH$ we have 
\bq
a(f)|0\rangle =0 \nn 
\eq
where $|0\rangle=1\oplus 0 \oplus 0 \cdots$ is the Fock space vacuum.

In general, a quadratic Hamiltonian on Fock space is a linear operator which is quadratic in terms of creation and annihilation operators. For example, $\dGamma(h)$ is a quadratic Hamiltonian because we can write
\begin{align*}
\dGamma(h)= \sum_{m,n\ge 1}\langle f_m, h f_n \rangle a^*(f_m) a(f_n) 
\end{align*}
where $\{f_n\}_{n\ge 1}\subset D(h)$ is an arbitrary orthonormal basis for $\gH$ (the sum on the right side is independent of the choice of the basis). In this paper, we will consider a general quadratic operator of the form
\bq \label{eq:quadratic-Hamiltonian}
\bH = \dGamma(h) + \frac{1}{2}\sum_{m,n\ge 1} \left( \langle J^*kf_m,f_n \rangle a(f_m) a(f_n) + \overline{\langle J^*kf_m,f_n \rangle} a^*(f_m)a^*(f_n)  \right)
\eq
where $k:\gH\to \gH^*$ is an unbounded linear operator with $D(h)\subset D(k)$ (called {\em pairing operator}) %which describes the interaction between particles 
and $J:\gH\to \gH^*$ is the anti-unitary operator \footnote{If $C:\gH\to \mathfrak{K}$ is anti-linear, then $C^*: \mathfrak{K} \to \gH$ is defined by $\langle C^*g,f \rangle_{\gH} = \langle C f, g \rangle_{\mathfrak{K}}$ for all $f\in \gH,g\in \mathfrak{K}.$ The anti-linear map $C$ is an anti-unitary if $C^*C=1_{\gH}$ and $CC^*=1_{\mathfrak{K}}$.} defined by
$$J(f)(g)=\langle f,g \rangle,\quad \forall f,g\in \gH.$$  
Since $\bH$ remains the same when $k$ is replaced by $(k+J k^* J)/2$, we will always assume without loss of generality that 
\bq \label{eq:symmetry-b-2}
k^* = J^* k J^*.
\eq

In fact, the formula \eqref{eq:quadratic-Hamiltonian} is formal but $\bH$ can be defined properly as a quadratic form as follows. For every normalized vector $\Psi \in \cF(\gH)$ with finite particle number expectation, namely $\langle \Psi, \mathcal{N} \Psi \rangle <\infty$, its one-particle density matrices $\gamma_\Psi: \h\to \h$ and $\alpha_\Psi:\h \to \h^*$ are linear operators defined by
\bq \label{eq:def-gamma-alpha}
\left\langle {f,{\gamma _\Psi }g} \right\rangle  = \left\langle \Psi, {{a^*}(g)a(f)} \Psi \right\rangle,~~ \left\langle {Jf, \alpha _\Psi  g } \right\rangle  = \left\langle \Psi, {a^*(g)a^*(f)} \Psi\right\rangle, \quad\forall f,g\in \h.
\eq
A formal calculation using \eqref{eq:quadratic-Hamiltonian} leads to the expression 
\bq \label{eq:def-H-gamma-alpha}
\langle \Psi, \bH \Psi \rangle = \Tr(h^{1/2} \gamma_\Psi h^{1/2})+ \Re \Tr(k^* \alpha_\Psi).
\eq
The formula \eqref{eq:def-H-gamma-alpha} makes sense when $h^{1/2} \gamma_\Psi h^{1/2}$ and $k^* \alpha_\Psi$ are trace class operators. We will use \eqref{eq:def-H-gamma-alpha} to define $\bH$ as a quadratic form with a dense form domain described below.

Since $\gH$ is separable and $D(h)$ is dense in $\gH$, we can choose finite-dimensional subspaces $\{Q_n\}_{n=1}^\infty$ such that
$$ Q_1 \subset Q_2 \subset ... \subset D(h) \quad \text{and}\quad \overline{\bigcup_{n\ge 1} Q_n} = \gH.$$
Then it is straightforward to verify that  
\begin{align} \label{eq:def-mcQ} \mathcal{Q}:=\bigcup_{M\ge 0} \bigcup_{n\ge 1} \Big( \bigotimes_{\rm sym}^M Q_n \Big) 
\end{align}
is a dense subspace of $\cF(\gH)$. Moreover, for every normalized vector $\Psi \in \mathcal{Q}$, $\gamma_\Psi$ and $\alpha_\Psi$ are finite rank operators with ranges in $D(h)$ and $J D(h)$, respectively. Note that $JD(h)\subset D(k^*)$ because $D(h)\subset D(k)$ and $k^*=J^*kJ^*$. Thus $h^{1/2} \gamma_\Psi h^{1/2}$ and $k^* \alpha_\Psi$ are trace class and $\langle \Psi, \bH \Psi \rangle$ is well-defined by \eqref{eq:def-H-gamma-alpha} for every normalized vector $\Psi$ in $\mathcal{Q}$. 

Under certain conditions (see Lemma \ref{lem:bounded-below-H}), we can show that the quadratic form $\bH$ defined by \eqref{eq:def-H-gamma-alpha} is bounded from below and closable, and hence its closure defines a self-adjoint operator by \cite[Theorem VIII.15]{ReeSim-80}. Let us mention that if $k$ is not Hilbert-Schmidt, then the vacuum does not belong to the operator domain of $\bH$ (although it belongs to the form domain). Therefore, it is not easy to define $\bH$ as an operator with a dense domain at the beginning. %It is the reason we have to define $\bH$ as a quadratic form as the beginning, and the closability of the form is not automatic.  
 
A key feature of the quadratic Hamiltonians in Bogoliubov's theory is that they can be diagonalized to those of {noninteracting} systems by a special class of unitary operators which preserve the CCR algebra. By Bogoliubov's argument \cite{Bogoliubov-47b}, the diagonalization problem on Fock space can be associated to a diagonalization problem on $\gH\oplus \gH^*$ in a very natural way which will be  briefly recalled below.  

% the problem of diagonalizing $\bH$ on Fock space can be reduced to the problem of diagonalizing $\cA$ on $\gH\oplus \gH^*$ by a symplectic operator (in fact, Bogoliubov gave an explicit diagonalization when $\dim \gH=2$, but his idea can be generalized to more  general cases). This is done by using a special class of unitary operators on Fock space associated with symplectic operators on $\gH\oplus \gH^*$ which preserve the CCR algebra \eqref{eq:CCR} in a natural way. 

\subsection{Bogoliubov transformation} \label{sec:Bogoliubov-transformation} Since we will consider transformations on $\gH\oplus \gH^*$, it is convenient to introduce the {generalized annihilation} and {creation} operators 
\bq \label{eq:def-A=a+a*}
A(f\oplus Jg)=a(f)+a^*(g), \quad A^*(f\oplus Jg)=a^*(f)+a(g),\quad \forall f,g\in \h.
\eq
They satisfy the conjugate and canonical commutation relations 
\bq \label{eq:conjugate-CCR}
A^*( F_1)=A(\J F_1),\quad \left[ {A(F_1 ),A^* (F_2 )} \right] = (F_1 ,\S F_2 ),\quad \forall F_1,F_2\in  \h\oplus \h^*
\eq 
where we have introduced the block operators on $\gH\oplus \gH^*$
\bq
\S = \left( \begin{gathered}
  1~{\text ~~~}0 \hfill \\
  0~~-1 \hfill \\ 
\end{gathered}  \right),\quad \J = \left( {\begin{array}{*{20}c}
   0 & {J^* }  \\
   J & 0  \\
 \end{array} } \right). \label{eq:def-S-J}
\eq
Note that $S=S^{-1}=S^*$ is a unitary on $\gH\oplus \gH^*$ and $\cJ=\cJ^{-1}=\cJ^{*}$ is an anti-unitary. The symplectic matrix $S$ is the bosonic analogue to the identity in the fermionic case. 

We say that a bounded operator $\cV$ on $\gH\oplus \gH^*$ is {\em unitarily implemented} by a unitary operator $\bU_\cV$ on Fock space if 
\bq \label{eq:V-Uv-action}
\bU_\cV A(F) \bU_\cV^* = A(\cV F), \quad \forall F\in \gH\oplus \gH^*.
\eq  
It is easy to see that if \eqref{eq:V-Uv-action} holds true, then the CCR \eqref{eq:conjugate-CCR} imply the following compatibility conditions 
\begin{align}\label{eq:Bogoliubov-transformation}\cJ \cV\cJ=\cV, \quad \cV^* S \cV = S = \cV S \cV^*.
\end{align}
Any {bounded} operator $\cV$ on $\gH\oplus \gH^*$ satisfying \eqref{eq:Bogoliubov-transformation} is called a {\em Bogoliubov transformation}. See \cite[Chap. 11]{DerGer-13} for an alternative description of Bogoliubov transformations in the context of symplectic geometry.

The condition $\cJ \cV \cJ =\cV$ means that $\cV$ has the block form
\bq \label{eq:u-v-block}
\cV = \left( {\begin{array}{c c}
  U & J^* V J^*  \\ 
  V & J U J^* 
\end{array}} \right)
\eq
where $U:\gH \to \gH$ and $V:\gH \to \gH^*$ are linear bounded operators. Under this form, the condition $\cV^* S \cV = S = \cV S \cV^*$ is equivalent to 
\bq \label{eq:relation-U-V}
U^* U = 1+ V^*V, \quad UU^*= 1+ J^* VV^* J, \quad  V^* J U =  U^*J^*V.
\eq
It is a fundamental result that a Bogoliubov transformation $\cV$ of the form \eqref{eq:u-v-block}  is unitarily implementable if and only if it satisfies {\em Shale's condition} \cite{Shale-62} 
\bq \label{eq:Shale}
\|V\|_{\rm HS}^2 = \Tr(V^*V)<\infty.
\eq 

Now we come back to the problem of diagonalizing $\bH$. Using the formal formula \eqref{eq:quadratic-Hamiltonian} and the assumption $k^*=J^*kJ^*$, we can write 
\bq \label{eq:quadratic-Hamiltonian-def2}
\bH= \bH_{\cA} - \frac{1}{2}\Tr(h) 
\eq
where
\begin{align} \label{eq:def-A-block}
\cA := \left( {\begin{array}{*{20}{c}}
   h & k^* \\ 
  k & J h J^* 
\end{array}} \right) 
\end{align}
and 
$$
\bH_{\cA}:=\frac{1}{2}\sum_{m,n\ge 1} \langle F_m, \cA F_n \rangle A^*(F_m)A(F_n).
$$
Here $\{F_n\}_{n\ge 1}$ is an orthonormal basis for $\gH\oplus \gH^*$ and the definition $\bH_{\cA}$ is  independent of the choice of the basis. Note that $\cJ\cA\cJ=\cA$ because of the symmetry condition $k^*=J^*kJ^*$. 

Now let $\cV$ be a Bogoliubov transformation on $\gH\oplus \gH^*$ which is implemented by a unitary operator $\bU_\cV$ on Fock space as in \eqref{eq:V-Uv-action}. Then we can verify that $\bU_\cV \bH_{\cA} \bU_\cV^* =  \bH_{\cV \cA \cV^*}$ and hence \eqref{eq:quadratic-Hamiltonian-def2} is equivalent to 
\bq \label{eq:quadratic-Hamiltonian-def3}
\bU_{\cV}\bH \bU_{\cV}^* = \bH_{\cV\cA \cV^*} - \frac{1}{2}\Tr(h). 
\eq
In particular, if $\cV \cA \cV^*$ is {\em block diagonal}, namely
$$ %\label{eq:VAV*-diag}
\cV \cA \cV^* = \left( {\begin{array}{*{20}{c}}
  \xi & 0  \\ 
  0& J \xi J^* 
\end{array}} \right)
$$
for some operator $\xi:\gH\to \gH$, then \eqref{eq:quadratic-Hamiltonian-def3} reduces to 
\bq \label{eq:quadratic-Hamiltonian-def4}
\bU_{\cV}\bH \bU_{\cV}^* = \dGamma(\xi)+ \frac{1}{2} \Tr( \xi -h). 
\eq
Note that all formulas \eqref{eq:quadratic-Hamiltonian-def2}, \eqref{eq:quadratic-Hamiltonian-def3} and \eqref{eq:quadratic-Hamiltonian-def4} are formal because $h$, $\xi$ and $\xi-h$ may be not trace class. Nevertheless, the above heuristic argument suggests that the diagonalization problem on $\bH$ can be reduced to the diagonalization problem on $\cA$ by Bogoliubov transformations. 

\subsection{Diagonalization conditions} \label{sec:example} In this paper we are interested in the conditions on $h$ and $k$ such that $\cA$ and $\bH$ can be diagonalized rigorously. 

Let us make some historical remarks. The physical model in Bogoliubov's 1947 paper \cite{Bogoliubov-47b} corresponds to the case when $\dim \gH=2$ and $\cA$ is a $2\times 2$ real matrix which can be diagonalized explicitly (more precisely, in his case particles only come in pairs with momenta $\pm p$ and each pair can be diagonalized independently). In fact, when $\dim \gH$ is finite, the  diagonalization of $\cA$ by symplectic matrices can be done by Williamson's Theorem \cite{Williamson-36}. We refer to H\"ormander \cite{Hormander-95} for a complete discussion on the diagonalization problem in the finite dimensional case.

In the 1950's and 1960's, Friedrichs \cite{Friedrichs-53} and Berezin \cite{Berezin-66} gave general diagonalization results in the case $\dim \gH=+\infty$, assuming that $h$ is bounded, $k$ is Hilbert-Schmidt, and $\cA\ge \mu>0$ for a constant $\mu$. Note that the gap condition $\cA\ge \mu$ requires that $h\ge \mu>0$.

In the present paper we always assume that $\cA>0$ but we do not require a gap. In some cases, the weaker assumption $\cA\ge 0$ might be also considered, but is is usually transferred back to the strict case $\cA>0$ by using an appropriate decomposition; see Kato and Mugibayashi \cite{KatMug-67} for a further discussion.

In many physical applications, it is important to consider unbounded operators. In the recent works on the excitation spectrum of interacting Bose gases, the diagonalization problem has been studied by Grech and Seiringer in \cite{GreSei-13} when $h$ is a positive operator with compact resolvent and $k$ is Hilbert-Schmidt, and then by Lewin, Nam, Serfaty and Solovej \cite[Appendix A]{LewNamSerSol-15} when $h$ is a general unbounded operator satisfying $h\ge \mu>0$. 

% In principle, the gap condition $h \ge \mu >0$ implies that $\bH \ge (\mu/2)\mathcal{N}-{\rm const}$, where $\mathcal{N}=\dGamma(1)$ is the particle number operator, and this lower bound ensures that $\bH$ has a unique ground state. Then it is possible to show that the ground state must be a quasi-free state $\bU_\cV^*|0\rangle$, where $|0\rangle=1\oplus 0 \oplus \cdots$ is the vacuum in Fock space, and $\bH$ is diagonalized by $\bU_\cV$. 

Very recently, in 2014, Bach and Bru \cite{BacBru-15} established for the first time the diagonalization problem when $h$ is not bounded below away from zero. They assumed that $h>0$, $\|kh^{-1}\|<1$ and $kh^{-s}$ is Hilbert-Schmidt for all $s\in [0,1+\eps]$ for some $\eps>0$ (see conditions (A2) and (A5) in \cite{BacBru-15}). 
%These conditions allow them to establish the diagonalization problem using non-autonomous flow equations. 

In the present paper, we relax not only the gap condition $h\ge \mu>0$ but also the Hilbert-Schmidt conditions on $k$ ($k$ is even allowed to be an unbounded operator). Our conditions are motivated by the following simple example where all relevant operators commute. 
\begin{example}[Commutative case] Let $h$ and $k$ be multiplication operators on $\gH=L^2(\Omega,\C)$, for some measure space $\Omega$. Then $J$ is simply complex conjugation and we can identify $\gH^*=\gH$ for simplicity. Assume that $h>0$, but $k$ is not necessarily real-valued. Then 
$$
\cA := \left( {\begin{array}{*{20}{c}}
   h & k \\ 
  k & h 
\end{array}} \right)>0 \quad \text{on}~\gH\oplus \gH^*.
$$
if and only if $-1<G<1$ with $G:=|k|h^{-1}$. In this case, $\cA$ is diagonalized by the linear operator
$$
\cV :=  \sqrt{\frac{1}{2}+\frac{1}{2\sqrt{1-G^2}}} \left( {\begin{array}{*{20}{c}}
  1 & \frac{-G}{1+\sqrt{1-G^2}}  \\ 
  \frac{-G}{1+\sqrt{1-G^2}}  & 1 
\end{array}} \right)
$$
in the sense that 
$$
\cV \cA \cV^* =  \left( {\begin{array}{*{20}{c}}
   \xi & 0 \\ 
  0 & \xi 
\end{array}} \right)\quad \text{with} \quad \xi:=h \sqrt{1-G^2}=\sqrt{h^2-k^2} >0. 
$$
It is straightforward to verify that $\cV$ always satisfies the compatibility conditions \eqref{eq:Bogoliubov-transformation}. Moreover, $\cV$ is bounded (and hence a Bogoliubov transformation) if and only if $\|G\|=\|kh^{-1}\|<1$  and in this case
\bq  \label{eq:comm-example-norm-V}
\|\cV\| \sim (1- \|G\|)^{-1/4}
\eq
(which means that the ratio between $\|\cV\|$ and $(1- \|G\|)^{-1/4}$ is bounded from above and below by universal positive constants). By Shale's condition \eqref{eq:Shale}, $\cV$ is unitarily implementable if and only if $kh^{-1}$ is Hilbert-Schmidt and in this case, under the conventional form \eqref{eq:u-v-block}, 
\bq  \label{eq:comm-example-HS-V}
\|V\|_{\rm HS} \sim (1- \|G\|)^{-1/4} \|G\|_{\rm HS}.
\eq
Finally, from \eqref{eq:quadratic-Hamiltonian-def4} and the simple estimates 
$$
-\frac{1}{2} k^2 h^{-1} \ge \xi -h = \sqrt{h^2-k^2} -h \ge - k^2 h^{-1}
$$
we deduce that $\bH$ is bounded from below if and only if $k h^{-1/2}$ is Hilbert-Schmidt and in this case
\bq  \label{eq:comm-example-lwb-bH}
\inf \sigma(\bH) \sim - \|k h^{-1/2}\|_{\rm HS}^2.
\eq
  
Thus, in summary, in the above commutative example we have the following {\em optimal} conditions:
\begin{itemize}

\item[$\bullet$] $\cA$ is diagonalized by a Bogoliubov transformation $\cV$ if and only if $\|kh^{-1}\|<1$.  

\item[$\bullet$] $\cV$ is unitarily implementable if and only if  $kh^{-1}$ is Hilbert-Schmidt.

\item[$\bullet$] $\bH$ is bounded from below if and only if $kh^{-1/2}$ is Hilbert-Schmidt.
\end{itemize}
\end{example}

The main message of our work is that
the above necessary and sufficient
conditions in the commutative case
are indeed sufficient also in the
general non-commutative case. Our results are formulated precisely in Theorem \ref{thm:diag} and Theorem \ref{thm:diag-Hamiltonian} in the next section. 

\medskip

\noindent\textbf{Acknowledgements.} We thank Jan Derezi\'nski for several inspiring discussions and useful remarks. We thank the referee for helpful comments. JPS thanks the Erwin Schr\"odinger Institute  for the hospitality during the thematic programme ``Quantum many-body systems, random matrices, and disorder". We gratefully acknowledge the financial supports by the European Union’s Seventh Framework Programme under the ERC Advanced Grant ERC-2012-AdG 321029 (JPS) and the REA grant agreement No. 291734 (PTN), as well as the support of the National Science Center (NCN) grant No. 2012/07/N/ST1/03185 and the Austrian Science Fund (FWF) project Nr. P 27533-N27 (MN).

\section{Main results} \label{sec:main-result}

In this section we state our main results and explain the strategy of the proof. Our first main result concerns diagonalization of block operators.

\begin{theorem}[Diagonalization of bosonic block operators]\label{thm:diag}\text{}\smallskip\\
\noindent {\rm (i) (Existence)}. Let $h:\gH\to \gH$ and $k:\gH\to \gH^*$ be (unbounded) linear operators satisfying $h=h^*>0$, $k^*=J^*kJ^*$ and $D(h)\subset D(k)$. Assume that the operator  $G:=h^{-1/2}J^*kh^{-1/2}$ is densely defined and extends to a bounded operator satisfying $\|G\|<1$. Then we can define the self-adjoint operator
$$
\cA := \left( {\begin{array}{*{20}{c}}
   h & k^* \\ 
  k & J h J^* 
\end{array}} \right)>0 \quad \text{on}~\gH\oplus \gH^*
$$
by Friedrichs' extension. This operator can be diagonalized by a bosonic Bogoliubov transformation $\cV$ on $\gH\oplus \gH^*$ in the sense that
\begin{align*}
\cV \cA \cV^* = \left( {\begin{array}{*{20}{c}}
  \xi & 0  \\ 
  0& J \xi J^* 
\end{array}} \right)
\end{align*}
for a self-adjoint operator $\xi>0$ on $\gH$. Moreover, we have
\bq \label{eq:V-norm-main-thm}
 \|\cV\| \le \left(\frac{1+\|G\|}{1-\|G\|} \right)^{1/4}.
 \eq
\smallskip

\noindent {\rm (ii) (Implementability)}. Assume further that $G$ is Hilbert-Schmidt. Then $\cV$ is unitarily implementable and, under the block form \eqref{eq:u-v-block},
\begin{align} \label{eq:V*V-HS}
\|V\|_{\rm HS} \le \frac{2}{1-\|G\|} \|G\|_{\rm HS}.
\end{align}
\end{theorem}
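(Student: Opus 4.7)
My plan for part (i) is to reduce to the gapped case $h \geq \mu > 0$---treated in Lewin--Nam--Serfaty--Solovej \cite{LewNamSerSol-15}---by an approximation argument, and to deduce the norm bound \eqref{eq:V-norm-main-thm} uniformly. Concretely, I would replace $h$ by $h_\epsilon := h + \epsilon$ for $\epsilon > 0$, diagonalize the resulting gapped $\cA_\epsilon$ by a Bogoliubov transformation $\cV_\epsilon$, and extract a subsequential weak-operator limit as $\epsilon \to 0^+$. The target norm bound, which is essentially sharp already in the commutative example, must be inherited from uniform-in-$\epsilon$ bounds on $\cV_\epsilon$.

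\textbf{Definition of $\cA$.} Decompose $\cA = \cH + \cK$ with $\cH := \begin{pmatrix} h & 0 \\ 0 & JhJ^* \end{pmatrix}$ and $\cK := \begin{pmatrix} 0 & k^* \\ k & 0 \end{pmatrix}$. The identity $k^* = J^* k J^*$ implies $\|\cH^{-1/2} \cK \cH^{-1/2}\| = \|G\| < 1$, so by the KLMN theorem the quadratic form $q_\cA(F) := \langle F, \cH F\rangle + \langle F, \cK F\rangle$ is closed and semibounded on the form domain $D(\cH^{1/2})$, with
\[ (1-\|G\|) \langle F, \cH F\rangle \leq q_\cA(F) \leq (1+\|G\|) \langle F, \cH F\rangle. \]
Friedrichs' extension then yields a positive self-adjoint operator $\cA$ on $\gH \oplus \gH^*$ with $D(\cA^{1/2}) = D(\cH^{1/2})$, and in particular $\cA > 0$.

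\textbf{Approximation, limit, and main obstacle.} The contraction estimate $\|h^{1/2}(h+\epsilon)^{-1/2}\| \leq 1$ yields $\|G_\epsilon\| \leq \|G\|$, so the gapped result produces $\cV_\epsilon$ and $\xi_\epsilon > \epsilon$ with $\|\cV_\epsilon\| \leq \bigl((1+\|G\|)/(1-\|G\|)\bigr)^{1/4}$ uniformly in $\epsilon$. By Banach--Alaoglu, a subsequence $\cV_\epsilon$ converges in the weak operator topology to some $\cV$; the bound on $\|\cV\|$ and the identity $\cJ \cV \cJ = \cV$ are preserved. The main obstacle is that weak-operator convergence does not automatically pass through the products appearing in the symplectic identities $\cV^* S \cV = S = \cV S \cV^*$ nor in $\cV \cA \cV^* = \Xi$. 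I would overcome this by upgrading to strong-operator convergence of $\cV_\epsilon$ on a dense core, combining the uniform norm bound with the strong resolvent convergence $\cA_\epsilon \to \cA$ (itself a consequence of the monotone form convergence $q_{\cA_\epsilon} \downarrow q_\cA$), and in parallel defining $\xi$ as a strong resolvent limit of $\xi_\epsilon$.

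\textbf{Implementability.} For (ii), the commutative example dictates the representation $V_\epsilon = -p(G_\epsilon) \, G_\epsilon \, J$, arising from the ansatz $U_\epsilon = \cosh(r_\epsilon)$, $V_\epsilon = -\sinh(r_\epsilon) J$ with $\tanh(2 r_\epsilon) = G_\epsilon$ interpreted via functional calculus of the self-adjoint $G_\epsilon$, where $p$ is an operator function satisfying $\|p(G_\epsilon)\| \leq 2/(1-\|G\|)$ (by explicit estimation on the spectrum of $G_\epsilon$, which lies in $[-\|G\|, \|G\|]$). The Hilbert--Schmidt ideal property and the bound $\|G_\epsilon\|_{\rm HS} \leq \|G\|_{\rm HS}$ (again from the contraction $h^{1/2} h_\epsilon^{-1/2}$, applied inside the Hilbert--Schmidt trace) then give
\[ \|V_\epsilon\|_{\rm HS} \leq \|p(G_\epsilon)\| \, \|G_\epsilon\|_{\rm HS} \leq \frac{2}{1-\|G\|} \|G\|_{\rm HS}. \]
Lower semicontinuity of the Hilbert--Schmidt norm under weak-operator convergence preserves the bound for $V$, and Shale's criterion \eqref{eq:Shale} gives implementability.
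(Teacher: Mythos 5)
Your proposal has two genuine gaps, one in each part. For part (i), the entire quantitative content — the uniform bound $\|\cV_\epsilon\|\le ((1+\|G\|)/(1-\|G\|))^{1/4}$ — is asserted rather than proven: the gapped result of \cite{LewNamSerSol-15} is qualitative and does not supply this estimate (nor, for that matter, is it stated under the bare hypothesis $\|G\|<1$ with no Hilbert--Schmidt-type condition on $k$), so you are assuming precisely the bound \eqref{eq:V-norm-main-thm} that the theorem is meant to establish. The paper gets it from an explicit construction: with $B:=\cA^{1/2}S\cA^{1/2}$ one has $\delta\cA\le S\cA S\le\delta^{-1}\cA$ with $\delta=(1-\|G\|)/(1+\|G\|)$, hence $\delta^{1/2}\cA\le|B|\le\delta^{-1/2}\cA$, and $\cV:=\cU|B|^{1/2}\cA^{-1/2}$ (with $\cU$ from the fermionic Theorem \ref{thm:diag-fermion}) is bounded by $\delta^{-1/4}$. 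Your limiting step is also not workable as described: a diagonalizer is only determined up to composition with block-diagonal (unitary) Bogoliubov maps, so a black-box family $\cV_\epsilon$ has no reason to converge even weakly; and weak-operator limits do not preserve $\cV^*S\cV=S=\cV S\cV^*$ or $\cV\cA\cV^*=\Xi$ (weak limits of Bogoliubov maps can degenerate, just as weak limits of unitaries need not be unitary). The proposed ``upgrade to strong convergence from strong resolvent convergence of $\cA_\epsilon$'' has no mechanism behind it, since $\cV_\epsilon$ is not a continuous function of $\cA_\epsilon$; you would at least need a canonical choice of $\cV_\epsilon$ (e.g.\ fixing $\cV_\epsilon^*\cV_\epsilon=\cA_\epsilon^{-1/2}|B_\epsilon|\cA_\epsilon^{-1/2}$ and a coherent unitary part) together with an actual convergence proof, none of which is supplied.

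For part (ii), the step that fails outright is the representation $V_\epsilon=-p(G_\epsilon)G_\epsilon J$ coming from the ansatz $U=\ch(r)$, $V=-\sh(r)J$ with $\tanh(2r)=G_\epsilon$. This formula diagonalizes $\cA_\epsilon$ only in the commutative case; in general the diagonalizing $V$ is \emph{not} a function of $G$ (if it were, the theorem would be immediate), so the bound $\|V_\epsilon\|_{\rm HS}\le\|p(G_\epsilon)\|\,\|G_\epsilon\|_{\rm HS}$ is unjustified. Note also that $G=h^{-1/2}J^*kh^{-1/2}$ is an \emph{anti-linear} operator (it is $J$-self-adjoint in the sense of Lemma \ref{lem:e.v.anti-unitary}), so even the functional calculus you invoke is not the standard self-adjoint one. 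The paper's proof of \eqref{eq:V*V-HS} instead controls $\cV^*\cV-1=\cA^{-1/2}(|B|-\cA)\cA^{-1/2}$ through an integral representation of the square root, a resolvent identity, an operator Cauchy--Schwarz inequality and a trace bound $\Tr(K)\lesssim\|G\|_{\rm HS}^2/(1-\|G\|)^2$, and then sharpens the constant using the simultaneous eigenbasis of $X=V^*V$ and the anti-linear $J^*Y$ (Lemma \ref{lem:diag-X-Y}); your Shale-criterion endgame and the weak lower semicontinuity of the Hilbert--Schmidt norm are fine, but they rest on an estimate for $V_\epsilon$ that you have not actually established.
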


Next, we consider the diagonalization of quadratic Hamiltonians. 

\begin{theorem} [Diagonalization of quadratic Hamiltonians]\label{thm:diag-Hamiltonian} We keep all assumptions in {\rm Theorem \ref{thm:diag}} (that $\|G\|<1$ and $G$ is Hilbert-Schmidt) and assume further that $kh^{-1/2}$ is Hilbert-Schmidt. Then the quadratic Hamiltonian $\bH$, defined as a quadratic form by \eqref{eq:def-H-gamma-alpha}, is bounded from below and closable, and hence its closure defines a self-adjoint operator which we still denote by $\bH$. Moreover, if $\U_\cV$ is the unitary operator on Fock space implementing the Bogoliubov transformation $\cV$ in Theorem \ref{thm:diag}, then 
\bq \label{eq:diag-H-dGxi}
\bU_\cV \bH \bU_\cV^* = \dGamma(\xi)+ \inf\sigma(\bH).
\eq
Finally, $\bH$ has a unique ground state $\Psi_0=\bU_\cV^*|0\rangle$ whose one-particle density matrices are $\gamma_{\Psi_0}=V^*V$ and $\alpha_{\Psi_0}=JU^*J^*V$ and
\bq \label{eq:inf-H-main-thm}
\inf\sigma(\bH)=\Tr(h^{1/2}\gamma_{\Psi_0} h^{1/2})+\Re \Tr(k^*\alpha_{\Psi_0}) \ge -\frac{1}{2} \| kh^{-1/2}\|_{\rm HS}^2 .
\eq
In particular, $h^{1/2}\gamma_{\Psi_0} h^{1/2}$ and $k^* \alpha_{\Psi_0}$ are trace class. 
\end{theorem}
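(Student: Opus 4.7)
My plan is to prove the theorem in three stages: establishing that $\bH$ is a self-adjoint operator, carrying out the diagonalization identity, and identifying the ground state together with the energy formula and bound.

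First, by Lemma~\ref{lem:bounded-below-H}, the form $\bH$ on $\mathcal{Q}$ is bounded below and closable under our hypotheses, so $\bH$ has a self-adjoint realization via Friedrichs' extension. The essential input for that lemma is a Cauchy--Schwarz estimate on the pairing
\[
|\Re\Tr(k^{*}\alpha_\Psi)|\le\|kh^{-1/2}\|_{\rm HS}\,\|h^{1/2}\alpha_\Psi^{*}\|_{\rm HS},
\]
combined with a bosonic a priori control of $\alpha_\Psi^{*}\alpha_\Psi$ by $\gamma_\Psi$ (a consequence of the positivity of the generalized one-particle density matrix on $\gH\oplus\gH^{*}$). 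This bound need not be optimal in the constant; the optimal lower bound will be obtained a posteriori from the diagonalization.

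Second, to pass from the formal identity \eqref{eq:quadratic-Hamiltonian-def4} to the operator equation \eqref{eq:diag-H-dGxi}, I would use finite-dimensional truncation. Let $P_n$ denote the orthogonal projection onto $Q_n\subset D(h)$, and form the compressions $h_n,k_n$ (chosen to preserve the symmetry $k_n^{*}=J^{*}k_nJ^{*}$) together with the associated objects $\cA_n,\bH_n,\cV_n,\xi_n$ on $\F(Q_n)$. By classical Williamson theory,
\[
\bU_{\cV_n}\bH_n\bU_{\cV_n}^{*}=\dGamma(\xi_n)+\tfrac12\Tr(\xi_n-h_n),
\]
with every ingredient finite rank and trace class. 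The uniform norm estimates \eqref{eq:V-norm-main-thm}--\eqref{eq:V*V-HS} applied to $\cV_n$, together with the monotonicity $\|k_nh_n^{-1/2}\|_{\rm HS}\le\|kh^{-1/2}\|_{\rm HS}$, allow me to pass to the limit: $\bH_n\to\bH$ in the strong resolvent sense on the form domain, $\bU_{\cV_n}\to\bU_\cV$ and $\dGamma(\xi_n)\to\dGamma(\xi)$ strongly on appropriate cores, and the scalar sequence $\tfrac12\Tr(\xi_n-h_n)$ converges to a finite limit. This yields \eqref{eq:diag-H-dGxi} with $\inf\sigma(\bH)$ equal to this limit.

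Third, the diagonalization immediately yields the ground state: since $\dGamma(\xi)\ge 0$ admits $|0\rangle$ as its unique ground state (thanks to $\xi>0$), $\Psi_0=\bU_\cV^{*}|0\rangle$ is the unique ground state of $\bH$. For the density matrices, I substitute \eqref{eq:u-v-block} into $\bU_\cV a^{*}(g)a(f)\bU_\cV^{*}=A(\cV(0\oplus Jg))A(\cV(f\oplus 0))$ via \eqref{eq:V-Uv-action} and evaluate at the vacuum: only the contraction $\langle J^{*}Vg,J^{*}Vf\rangle=\langle f,V^{*}Vg\rangle$ survives, giving $\gamma_{\Psi_0}=V^{*}V$; an analogous expansion for $a^{*}(g)a^{*}(f)$ produces $\alpha_{\Psi_0}=JU^{*}J^{*}V$. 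Both are trace class because $V$ is Hilbert--Schmidt. The energy identity $\inf\sigma(\bH)=\Tr(h^{1/2}\gamma_{\Psi_0}h^{1/2})+\Re\Tr(k^{*}\alpha_{\Psi_0})$ follows by verifying that $\Psi_0$ belongs to the form domain (using the Hilbert--Schmidt hypotheses on $V$ and $kh^{-1/2}$ to control the pairing term).

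The step I expect to be the main obstacle is the explicit inequality $\inf\sigma(\bH)\ge-\tfrac12\|kh^{-1/2}\|_{\rm HS}^{2}$, equivalently $\Tr(h-\xi)\le\|kh^{-1/2}\|_{\rm HS}^{2}$. In the commutative setting this follows from $\xi=\sqrt{h^{2}-k^{*}k}$ and operator monotonicity of the square root, yielding $h-\xi\le k^{*}k/h$ and hence the desired trace bound. In the general non-commutative case $\xi$ is only defined implicitly via $\cV$, so the inequality must be extracted either from a Schur-complement-type operator bound on $\xi^{2}$ derived from the $2\times2$ block form of $\cA$, or by first proving it at the level of the truncations $\xi_n$ and then passing to the limit via Fatou. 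This is where the genuine non-commutativity of $h$ and $k$ enters in a non-trivial way.
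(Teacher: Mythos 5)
Your first and third stages are sound and close to the paper (the lower bound is exactly Lemma \ref{lem:bounded-below-H}, and the vacuum computation of $\gamma_{\Psi_0}=V^*V$, $\alpha_{\Psi_0}=JU^*J^*V$ is routine). The genuine gap is in your second stage. The truncation scheme rests on three unproved convergences: $\bU_{\cV_n}\to\bU_\cV$, $\dGamma(\xi_n)\to\dGamma(\xi)$, and above all convergence of the scalars $\tfrac12\Tr(\xi_n-h_n)$ to a finite limit. None of these is routine here. The Bogoliubov transformation of Theorem \ref{thm:diag} is built as $\cV=\cU|B|^{1/2}\cA^{-1/2}$ with $\cU$ coming from a highly non-unique fermionic diagonalization, so the truncated $\cV_n$ are not canonically defined and there is no reason they converge to the particular $\cV$ (hence $\bU_{\cV_n}$ to the particular $\bU_\cV$) of Theorem \ref{thm:diag}; likewise $\xi$ is only defined implicitly through $\cV$, so resolvent convergence $\xi_n\to\xi$ needs an argument. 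Most seriously, the finiteness of $\lim_n\tfrac12\Tr(\xi_n-h_n)$ is equivalent to a trace-class statement about $\xi-h$ in the limit, and this is precisely what the authors state they could \emph{not} prove; your plan would need a uniform bound on $\Tr(\xi_n-h_n)$ that you do not supply, and without it the whole second stage collapses.

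You also misidentify the main obstacle. The explicit inequality $\inf\sigma(\bH)\ge-\tfrac12\|kh^{-1/2}\|_{\rm HS}^2$ is the easy part: it is already contained in Lemma \ref{lem:bounded-below-H}, since $\Tr(kh^{-1}k^*)=\|kh^{-1/2}\|_{\rm HS}^2$, and needs no Schur complement or Fatou argument. The real difficulty, which your proposal does not address, is the kinetic energy estimate $\Tr(h^{1/2}V^*Vh^{1/2})<\infty$ together with the trace-class property of $k^*JU^*J^*V$. The paper proves these (Lemma \ref{lem:Tr-hX}) by exploiting the \emph{linear} diagonalization equations of Lemma \ref{lem:diag-eq} for $X=V^*V$ and $Y=JU^*J^*V$, evaluated in the simultaneous eigenbasis provided by Lemma \ref{lem:diag-X-Y}, and then avoids $\Tr(\xi-h)$ altogether by computing $\langle\Psi,\bU_\cV\bH\bU_\cV^*\Psi\rangle$ for $\Psi\in\mathcal{Q}$ directly from the quadratic-form expression \eqref{eq:def-H-gamma-alpha}, transforming the generalized one-particle density matrix by $\cV^*$; this yields $\dGamma(\xi)$ plus the finite constant $\Tr(h^{1/2}Xh^{1/2})+\Re\Tr(k^*Y)$, i.e.\ \eqref{eq:diag-H-dGxi}, without ever showing $\xi-h$ is trace class. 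To salvage your truncation route you would need an analogue of that kinetic estimate giving uniform control of $\Tr(\xi_n-h_n)$, at which point you would essentially be redoing the paper's argument.
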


Before explaining the proof strategy, let us give some remarks.

$\bullet$ Since
$$GG^*\leq \|GG^*\|=\|G\|^2$$
we have
\begin{align} \label{eq:khk*-h} k h^{-1} k^* \le \|G\|^2 JhJ^*.
\end{align}
Thus, the boundedness condition $\|G\|<1$ is a strengthened version of the positivity $\cA>0$. This follows from the following elementary fact whose proof is left to the reader.

\begin{lemma}[Positivity of block operators] \label{lem:positivity-block}\text{}\\
\noindent {\rm (i)} Let $c,d:\gH \to \gH$ and $b:\gH\to \gH^*$ be bounded operators such that $c>0,d\ge 0$ and $b^*=J^*bJ^*$. Then 
$$
D 
: = \left(  {\begin{array}{*{20}c}
   c  & b^*   \\
   b & d  \\
 \end{array} } \right)\ge 0~\text{on}~\gH\oplus \gH^*~\text{if and only if}~b c^{-1} b^* \le d.
$$
\noindent {\rm (ii)} If $c,d,b$ are unbounded operators such that $c,d$ are self-adjoint and $d\ge bc^{-1}b^*$ (which in particular requires that $bc^{-1}b^*$ is densely defined), then $D\ge 0$ as a quadratic form on the domain $D(c^{1/2})\oplus D(d^{1/2})$. Consequently, $D$ can be extended to a non-negative self-adjoint operator by Friedrichs' extension. Moreover, if $b c^{-1} b^*<d$, then $D>0$. 
\end{lemma}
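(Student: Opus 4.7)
The plan is to reduce everything to the block Schur-complement identity (completion of the square): at the formal level one has
\begin{equation*}
\begin{pmatrix} c & b^* \\ b & d \end{pmatrix} = \begin{pmatrix} 1 & 0 \\ bc^{-1} & 1 \end{pmatrix}\begin{pmatrix} c & 0 \\ 0 & d - bc^{-1}b^* \end{pmatrix}\begin{pmatrix} 1 & c^{-1}b^* \\ 0 & 1 \end{pmatrix},
\end{equation*}
which at the level of quadratic forms becomes the key identity
\begin{equation*}
\langle (f,g), D(f,g)\rangle = \|c^{1/2}(f + c^{-1}b^*g)\|^2 + \langle g, (d - bc^{-1}b^*)g\rangle
\end{equation*}
for suitable pairs $(f,g) \in \gH \oplus \gH^*$. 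All of the claims of the lemma will follow from this one identity.

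For part (i), where $c$ is boundedly invertible, the identity holds on all of $\gH \oplus \gH^*$. The \emph{if} direction is immediate since both terms on the right are nonnegative when $d \ge bc^{-1}b^*$. For the \emph{only if} direction, given any $g \in \gH^*$, set $f = -c^{-1}b^*g$ to kill the first term; the assumption $D \ge 0$ then forces $\langle g, (d - bc^{-1}b^*)g\rangle \ge 0$ for every $g$, i.e.\ $d \ge bc^{-1}b^*$. This is the standard Schur-complement criterion for positivity of $2\times 2$ block operators.

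For part (ii), I would work on the form domain $\mathcal{D} := D(c^{1/2}) \oplus D(d^{1/2})$. The hypothesis $bc^{-1}b^* \le d$ means exactly that the operator $c^{-1/2}b^*$ extends uniquely to a bounded operator from $(D(d^{1/2}), \|d^{1/2}\cdot\|)$ into $\gH$ of norm $\le 1$; in particular $c^{-1}b^*g \in D(c^{1/2})$ with $c^{1/2}(c^{-1}b^*g) = c^{-1/2}b^*g \in \gH$. This makes every term of the identity above well-defined as a quadratic form on $\mathcal{D}$; nonnegativity is then manifest, and the standard theory of symmetric semibounded forms (e.g.\ Theorem VIII.15 of \cite{ReeSim-80}) provides closability and a Friedrichs self-adjoint extension. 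The strict inequality $D > 0$ under the strict condition $bc^{-1}b^* < d$ follows because vanishing of $\langle(f,g), D(f,g)\rangle$ forces $g = 0$ from the second term and then $f = 0$ from the first. The only (minor) obstacle lies in this part: carefully interpreting $c^{-1}b^*$ and $d - bc^{-1}b^*$ on $\mathcal{D}$ when $c,d$ are unbounded; once this is set up via the bounded auxiliary operator $c^{-1/2}b^*$, the argument is identical to the bounded case.
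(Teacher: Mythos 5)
The paper offers no proof of this lemma at all --- it is stated as ``an elementary fact whose proof is left to the reader'' --- so there is no official argument to compare against; your Schur-complement / completion-of-squares proof is precisely the standard argument the authors intend, and it is correct in substance. Two points deserve tightening. In part (i) you add the hypothesis that $c$ be boundedly invertible, whereas the lemma only assumes the bounded operator $c$ satisfies $c>0$, which allows $0\in\sigma(c)$; then $f=-c^{-1}b^*g$ need not exist. The fix is routine: take $f=-(c+\eps)^{-1}b^*g$ in your identity, which yields $\langle b^*g,(c+\eps)^{-1}b^*g\rangle\le\langle g,dg\rangle$ for every $\eps>0$, and let $\eps\downarrow 0$ (monotone convergence via the spectral theorem) to conclude $b^*g\in D(c^{-1/2})$ and $\|c^{-1/2}b^*g\|^2\le\langle g,dg\rangle$, which is the correct reading of $bc^{-1}b^*\le d$ in this generality; note, though, that in the places where the paper actually invokes the ``only if'' direction (e.g.\ to get \eqref{eq:relation-gamma-alpha}) the inverted block is $1+J\gamma_\Psi J^*\ge 1$, so your restricted version already covers the applications. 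In part (ii), the hypothesis only gives $b^*g\in D(c^{-1/2})$, not $b^*g\in D(c^{-1})$, so the assertion that $c^{-1}b^*g\in D(c^{1/2})$ is not literally justified; the completed square should be written as $\|c^{1/2}f+c^{-1/2}b^*g\|^2$ with $c^{-1/2}b^*$ the bounded extension you introduce (in the paper's application this operator is essentially $G^*$ up to conjugation by $J$), and the off-diagonal part of the form on $D(c^{1/2})\oplus D(d^{1/2})$ should be interpreted as $2\Re\langle c^{-1/2}b^*g,\,c^{1/2}f\rangle$ --- this is exactly the ``minor obstacle'' you flag, and with that reading your identity and the positivity and strict-positivity conclusions go through. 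Finally, for the last assertion it is slightly cleaner to apply Friedrichs' construction to the nonnegative symmetric operator $D$ on its natural operator domain rather than to invoke closability of the form via \cite[Theorem VIII.15]{ReeSim-80}, since a densely defined nonnegative form is not automatically closable; this, however, is the same level of informality that the paper itself adopts.
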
  

$\bullet$ The condition $\|G\|<1$ can be interpreted as a non-commutative analogue of the bound $\|kh^{-1}\|<1$ in the commutative case. In general, $\|G\|<1$ is {\em weaker} than $\|kh^{-1}\|<1$ because  
\begin{align*}
(k h^{-1} k^*)^2 &=k h^{-1}k^*kh^{-1} k^* = Jh (h^{-1}k^* J h^{-1} k^* k h^{-1}J^*k h^{-1})hJ^*    \\
&\le \|h^{-1}k^* J h^{-1} k^* k h^{-1}J^*k h^{-1}\| Jh^2J^* \le \| kh^{-1}\|^4 (J h J^*)^2 
\end{align*}
and the square root is operator monotone.

$\bullet$ The implementability condition $\|G\|_{\rm HS}<\infty$  can be interpreted as a non-commutative analogue of the condition $\|kh^{-1}\|_{\rm HS}<\infty$ in the commutative case. In general, these two conditions are not comparable, but if we assume further that $\|kh^{-1/2}\|_{\rm HS}<\infty$ then $\|G\|_{\rm HS}<\infty$ is {\em weaker} than $\|kh^{-1}\|_{\rm HS}<\infty$. Indeed, if $kh^{-1/2}$ is Hilbert-Schmidt then $kh^{-1}k^*$ is trace class and we have the spectral decomposition 
$$J^* kh^{-1}k^*J=\sum_n \lambda_n |u_n\rangle \langle u_n|$$ 
for an eigenbasis $\{u_n\}$ for $\gH$. Therefore, 
\begin{align*}
\|G\|_{\rm HS}^2 & = \Tr(\sqrt{J^* kh^{-1}k^* J} h^{-1}\sqrt{J^* kh^{-1}k^*J}) \\
& = \sum_n \lambda_n \langle u_n, h^{-1} u_n \rangle = \sum_n \langle u_n, J^* kh^{-1}k^* J h^{-1} u_n \rangle \\
& = \Tr(J^* kh^{-1}k^* J h^{-1}) = \Tr ((J^* kh^{-1})^2) \le \| k h^{-1}\|_{\rm HS}^2. 
\end{align*}
Here we have used the identity $\Tr(W^*W)=\Tr(WW^*)$ which holds true for every linear operator $W$. 

$\bullet$ The condition $\|kh^{-1/2}\|_{\rm HS}<\infty$ is the same as in the commutative case. This necessary condition was proved by Bruneau and Derezi\'nski in \cite{BruDer-07} when $k$ is Hilbert-Schmidt. Note that in order to ensure that $\bH$ is bounded from below, we do not really need the conditions $\|G\|<1$ and $\|G\|_{\rm HS}<\infty$ in Theorem \ref{thm:diag}, see Lemma \ref{lem:bounded-below-H}. 

$\bullet$ Our estimates \eqref{eq:V-norm-main-thm}, \eqref{eq:V*V-HS} and \eqref{eq:inf-H-main-thm} are comparable to the formulas \eqref{eq:comm-example-norm-V}, \eqref{eq:comm-example-HS-V} and \eqref{eq:comm-example-lwb-bH} in the commutative case. In particular, the implementability bound \eqref{eq:V*V-HS} provide an upper bound on the particle number expectation of the quasi-free state $\bU_\cV^*|0\rangle$ because $\|V\|_{\rm HS}^2=\langle 0| \bU_{\cV} \mathcal{N} \bU_\cV^*|0\rangle$.

$\bullet$ In Bach and Bru's work \cite{BacBru-15}, they proved \eqref{eq:diag-H-dGxi} under the assumptions 
$$\|kh^{-1}\|<1\quad \text{and}\quad \|kh^{-s}\|_{\rm HS}<\infty~~\text{for all}~~s\in [0,1+\eps].$$
Our Theorem \ref{thm:diag-Hamiltonian} is an improvement of Bach and Bru's result and our conditions seem to be optimal for \eqref{eq:diag-H-dGxi} to hold.

$\bullet$ When $\bH$ is not bounded from below, the diagonalization result on $\cA$ in Theorem \ref{thm:diag}  is of its own interest. In particular, assuming that $\cV$ is unitarily implementable, we can consider the positive self-adjoint operator $\bU_\cV \dGamma(\xi) \bU_\cV^*$ as a renormalized version of $\bH$ (shifted by an infinite constant). It allows us to discuss several important properties of the physical system, such as the (renormalized) ground state $\bU_\cV^*|0\rangle$ and the excitation spectrum, although $\bH$ is not bounded from below. 

\subsection*{Sketch of the proof} The starting point of our approach is to employ a connection between the bosonic diagonalization problem and its fermionic analogue. Such kind of connection has been known for a long time; see Araki \cite{Araki-68} for a heuristic  discussion. To be precise, we will use the following diagonalization result for fermionic block operators.  

\begin{theorem}[Diagonalization of fermionic block operators]\label{thm:diag-fermion}  Let $B$ be a self-adjoint operator on $\gH\oplus \gH^*$ such that $\cJ B \cJ=-B$ and such that $\dim \Ker(B)$ is either even (possibly $0$) or infinite. Then there exists a unitary operator $\cU$ on $\gH \oplus \gH^*$ such that $\cJ\cU \cJ=\cU$ and 
$$
\cU B \cU^* = \left( {\begin{array}{*{20}{c}}
  \xi & 0  \\ 
  0& -J \xi J^* 
\end{array}} \right)
$$
for some operator $\xi \ge0$ on $\gH$. Moreover, if $\Ker(B)=\{0\}$, then $\xi>0$. 
\end{theorem}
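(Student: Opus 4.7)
The starting observation is that $\cJ B \cJ = -B$, together with self-adjointness of $B$, forces the spectrum of $B$ to be symmetric about zero, with $\cJ$ giving an antilinear conjugation that swaps positive and negative parts. Concretely, let $P_+, P_0, P_-$ be the spectral projections of $B$ onto $(0,\infty)$, $\{0\}$, $(-\infty,0)$ respectively. Since $\cJ$ is antilinear and satisfies $\cJ^2 = 1$, functional calculus yields $\cJ P_\pm \cJ = P_\mp$ and $\cJ P_0 \cJ = P_0$, so $\cJ$ restricts to an antilinear isometric bijection $\Ran P_+ \to \Ran P_-$ and to an antilinear involutive isometry on $\Ker B = \Ran P_0$.

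My plan is to produce an orthogonal splitting $\gH \oplus \gH^* = K^+ \oplus K^-$ with $\cJ K^+ = K^-$ such that $B$ leaves $K^\pm$ invariant with $B = \pm |B|$ on $K^\pm$. The positive and negative spectral parts contribute $\Ran P_+$ and $\Ran P_-$ immediately. The genuine difficulty, and the main obstacle, is to split $\Ker B$ orthogonally as $K_0^+ \oplus \cJ K_0^+$; this is exactly where the parity hypothesis on $\dim \Ker B$ enters. Since $\cJ$ restricts to an antilinear involutive isometry on $\Ker B$, it admits an orthonormal basis $\{f_j\}$ of $\cJ$-fixed vectors. When $\dim \Ker B$ is even or infinite, I pair them up and set $e_j := (f_{2j-1} + i f_{2j})/\sqrt{2}$, so that $\cJ e_j = (f_{2j-1} - i f_{2j})/\sqrt{2}$; the closed span $K_0^+ := \overline{\Span\{e_j\}}$ is then orthogonal to $\cJ K_0^+$, and $K_0^+ \oplus \cJ K_0^+ = \Ker B$. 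The construction visibly breaks down exactly when $\dim \Ker B$ is odd and finite, which is the obstruction the parity hypothesis is designed to exclude.

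With $K^\pm := \Ran P_\pm \oplus K_0^\pm$ in hand, the antilinear bijection $\cJ : K^+ \to K^-$ forces $\dim K^+ = \dim K^- = \dim \gH$, so I pick any unitary $W : K^+ \to \gH$. I define $\cU$ on $\gH \oplus \gH^*$ by $\cU u := W u$ for $u \in K^+$ and $\cU(\cJ u) := \cJ(W u)$ for $\cJ u \in K^-$; the extension is complex-linear because $\cJ$ is antilinear on both sides. By construction $\cU$ is unitary and satisfies $\cJ \cU \cJ = \cU$. Setting $\xi := W (B|_{K^+}) W^*$ yields a non-negative self-adjoint operator on $\gH$, strictly positive exactly when $K_0^+ = \{0\}$, i.e. when $\Ker B = \{0\}$. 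A direct verification using $B = \pm |B|$ on $K^\pm$ and $\cJ B = - B \cJ$ then shows that $\cU B \cU^*$ preserves $\gH$ and $\gH^*$ (so the off-diagonal blocks vanish) and equals $\xi$ on $\gH$ and $-J \xi J^*$ on $\gH^*$, giving the claimed block-diagonal form.
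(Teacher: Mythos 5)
Your proposal is correct and follows essentially the same route as the paper: the spectral splitting into $\Ran P_\pm$ and $\Ker B$, the decomposition $\Ker B = K_0^+\oplus \cJ K_0^+$ built from an orthonormal basis of $\cJ$-fixed vectors (which is exactly the paper's Lemma~\ref{lem:e.v.anti-unitary}/Lemma~\ref{lem:even-dim}, stated by you without proof but a known fact), the unitary $\cU = W\oplus \cJ W\cJ$, and the definition $\xi = W(B|_{K^+})W^*$ with the same block-diagonal verification. No substantive differences to report.
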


By applying Theorem \ref{thm:diag-fermion} to $B=\cA^{1/2}S\cA^{1/2}$, with $S$ given in \eqref{eq:def-S-J}, we can construct the Bogoliubov transformation $\cV$ in Theorem \ref{thm:diag} explicitly:
$$ \cV:= \cU |B|^{1/2} \cA^{-1/2}.$$
This explicit construction is similar to the one used by Simon, Chaturvedi and Srinivasan \cite{SimChaSri-1999} where they offered a simple proof of Williamsons' Theorem. The implementability of $\cV$ is proved using a detailed study of $ \cV^*\cV= \cA^{-1/2}|B|\cA^{-1/2}.$ 

It should be mentioned that when $\bH$ is bounded from below and $\cV$ is unitarily implementable, the desired formula \eqref{eq:diag-H-dGxi},
$$\bU_\cV \bH \bU_\cV = \dGamma(\xi)+ \inf \sigma(\bH),$$
does not follow immediately. In fact, it is not easy to carry out the formal argument in Section \ref{sec:Bogoliubov-transformation} and in particular we could not prove that $\xi-h$ is trace class. Instead, we will use the quadratic form expression of $\bH$ and a key ingredient in our proof is the kinetic energy estimate $\Tr(h^{1/2}V^*Vh^{1/2})<\infty$. 
\\

\noindent\textbf{Organization of the paper.} We will prove Theorems \ref{thm:diag-fermion}, \ref{thm:diag} and \ref{thm:diag-Hamiltonian} in Sections \ref{sec:fermionic}, \ref{sec:diag-A} and \ref{sec:diag-H}, respectively. 

\section{Diagonalization of block operators: fermionic case} \label{sec:fermionic}

In this section we prove Theorem \ref{thm:diag-fermion}. Let us start by recalling a well-known diagonalization result on anti-linear operators. 

\begin{lemma}[Diagonalization of anti-linear operators] \label{lem:e.v.anti-unitary} Let $\gH$ be a separable Hilbert space and let $C$ be an anti-linear operator on $\gH$. Assume that $C=C^*$ and $C^2$ is either compact or equal to $1$. Then $C$ has an orthonormal eigenbasis for $\gH$ with non-negative eigenvalues. 
\end{lemma}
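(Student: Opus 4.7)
The plan is to reduce the diagonalization of the anti-linear operator $C$ to the spectral decomposition of the linear operator $C^2$, and then to treat each spectral subspace via the classical ``conjugation'' argument.

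First I would verify that $C^2$ is a non-negative self-adjoint \emph{linear} operator. Linearity follows because $C$ is anti-linear, while the hypothesis $C=C^*$ unfolds to the identity $\langle Cf, g\rangle = \langle Cg, f\rangle$ on $\gH$. Using this twice one computes
\begin{align*}
\langle C^2 f, g\rangle = \langle Cg, Cf\rangle = \overline{\langle Cf, Cg\rangle} = \overline{\langle C^2 g, f\rangle} = \langle f, C^2 g\rangle,
\end{align*}
and $\langle f, C^2 f\rangle = \|Cf\|^2 \ge 0$; in particular $\Ker(C^2)=\Ker(C)$. Whether $C^2$ is compact or equal to the identity, the spectral theorem then yields an orthogonal decomposition $\gH = \Ker(C) \oplus \bigoplus_\lambda E_{\lambda^2}$ into the kernel of $C$ and the eigenspaces $E_{\lambda^2}$ of $C^2$ corresponding to the positive eigenvalues $\lambda^2$ (all finite-dimensional in the compact case, with $E_1=\gH$ if $C^2=1$). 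Because each $\lambda^2$ is real, $C^2(Cf)=\lambda^2(Cf)$, so each $E_{\lambda^2}$ is $C$-invariant. An arbitrary orthonormal basis of $\Ker(C)$ supplies eigenvectors of $C$ for the eigenvalue $0$, so the task reduces to diagonalizing $C$ on each $E_{\lambda^2}$ with $\lambda>0$.

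On such an eigenspace I would set $\tilde C:=\lambda^{-1}C$, an anti-linear self-adjoint operator with $\tilde C^2=1$, i.e.\ a conjugation. The fixed-point set $\gH_{\R}:=\{f \in E_{\lambda^2}:\tilde C f = f\}$ is a real-linear subspace, and for any $f \in E_{\lambda^2}$ both $\tfrac{1}{2}(f+\tilde C f)$ and $\tfrac{-i}{2}(f-\tilde C f)$ are easily checked to lie in $\gH_{\R}$, giving the unique decomposition $f=x+iy$ with $x,y\in\gH_{\R}$. The restriction of $\langle \cdot,\cdot\rangle$ to $\gH_{\R}$ is real-valued, since $\langle x,y\rangle = \langle \tilde C x,y\rangle = \langle \tilde C y,x\rangle = \langle y,x\rangle$ for $x,y\in\gH_{\R}$. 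A countable real-orthonormal basis $\{e_n\}$ of $\gH_{\R}$, obtained by Gram--Schmidt applied to the image of a countable dense subset of $E_{\lambda^2}$ under the real-linear contraction $f \mapsto \tfrac{1}{2}(f+\tilde C f)$, is then automatically a complex orthonormal basis of $E_{\lambda^2}$: any $f=x+iy$ is recovered by combining the Fourier expansions of $x$ and $y$ against $\{e_n\}$. Since $\tilde C e_n = e_n$ translates into $C e_n = \lambda e_n$, these are eigenvectors of $C$ with eigenvalue $\lambda \ge 0$. Concatenating the resulting bases over all spectral pieces produces the claimed orthonormal eigenbasis.

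The main subtlety is in verifying that a real orthonormal basis of $\gH_{\R}$ really constitutes a complex orthonormal basis of the eigenspace; this rests on the real splitting $\gH = \gH_{\R} \oplus i\gH_{\R}$ together with the separability of $\gH$, which is what produces a countable basis in the potentially infinite-dimensional case $C^2=1$. In the compact case each nonzero eigenspace of $C^2$ is finite-dimensional and the conjugation step reduces to elementary linear algebra, whereas in the case $C^2=1$ the real-Hilbert-space structure of $\gH_{\R}$ has to be genuinely exploited.
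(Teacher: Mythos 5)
Your proof is correct, but it takes a genuinely different route from the paper's at the key step. Both arguments start the same way: $C^2$ is a non-negative self-adjoint linear operator with $\Ker(C^2)=\Ker(C)$, its eigenspaces are $C$-invariant, and the problem reduces to diagonalizing $C$ on each eigenspace $E_{\lambda^2}$ with $\lambda>0$. The paper then proceeds by finite-dimensional induction: from $C^2u=\mu u$ it extracts an eigenvector of $C$ with eigenvalue $\sqrt{\mu}$ using the factorization $(C+\sqrt{\mu})(C-\sqrt{\mu})u=0$, passes to the orthogonal complement, and iterates; for the genuinely infinite-dimensional case $C^2=1$ it reduces to finite dimensions by an auxiliary trick, picking a trace-class $K>0$ and observing that $K_1=K+C^*KC$ commutes with $C$ and has finite-dimensional eigenspaces on which the finite-dimensional result applies. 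You instead rescale $C$ on each $E_{\lambda^2}$ to a conjugation $\tilde C=\lambda^{-1}C$ and use the standard real-structure argument: the fixed-point set $\gH_{\R}$ gives $E_{\lambda^2}=\gH_{\R}\oplus i\gH_{\R}$, the inner product is real-valued there, and a real orthonormal basis of $\gH_{\R}$ is automatically a complex orthonormal basis of $E_{\lambda^2}$ consisting of eigenvectors of $C$ with eigenvalue $\lambda$. Your route treats finite- and infinite-dimensional eigenspaces uniformly and dispenses with the paper's auxiliary trace-class operator, at the cost of having to set up the real Hilbert-space structure (and its separability, which you do address); the paper's route avoids real structures entirely by reducing everything to elementary finite-dimensional linear algebra. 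One small point worth making explicit in your write-up: $\gH_{\R}$ is closed because $\tilde C$ is bounded, which is what makes it a real Hilbert space and justifies the Fourier expansions and the Gram--Schmidt construction you invoke.
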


The proof of Lemma \ref{lem:e.v.anti-unitary} is provided in the Appendix. As a consequence, we have

\begin{lemma} \label{lem:even-dim} Let $\mathfrak{H}$ be a separable Hilbert space such that $\dim  \mathfrak{H}$ is even or infinite. Let $\cJ=\cJ^*=\cJ^{-1}$ be an anti-unitary on $\mathfrak{H}$. Then there is a subspace $\mathfrak{K}$ of $\mathfrak{H}$ such that 
$$ \mathfrak{H} = \mathfrak{K}\oplus \cJ \mathfrak{K}.$$
\end{lemma}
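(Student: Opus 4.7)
The plan is to reduce this to Lemma \ref{lem:e.v.anti-unitary} applied to $C = \cJ$ itself. Since $\cJ^* = \cJ$ and $\cJ^2 = \cJ\cJ^{-1} = 1$, the hypotheses of Lemma \ref{lem:e.v.anti-unitary} are met (with $C^2 = 1$), so $\cJ$ admits an orthonormal eigenbasis $\{e_n\}$ of $\mathfrak{H}$ with non-negative eigenvalues $\lambda_n \ge 0$. Using anti-linearity, $e_n = \cJ^2 e_n = \cJ(\lambda_n e_n) = \lambda_n^2 e_n$, which forces $\lambda_n = 1$. Thus I may assume $\cJ e_n = e_n$ for every $n$, i.e. $\cJ$ acts as complex conjugation in this basis.

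Next, I use the dimension hypothesis: since $\dim \mathfrak{H}$ is even or infinite, I can index the basis as $\{e_{2k-1}, e_{2k}\}_{k \in I}$ for a finite or countable set $I$, and define
$$
f_k := \tfrac{1}{\sqrt{2}} (e_{2k-1} + i e_{2k}), \qquad k\in I,
$$
and let $\mathfrak{K}$ be the closed linear span of the $f_k$. A direct computation gives $\langle f_k, f_j\rangle = \delta_{kj}$, so $\{f_k\}$ is an orthonormal family, and anti-linearity yields $\cJ f_k = \tfrac{1}{\sqrt 2}(e_{2k-1} - i e_{2k})$, whence $\cJ \mathfrak{K}$ is the closed span of $\{\cJ f_k\}$.

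Two verifications remain, both straightforward. First, orthogonality $\mathfrak{K} \perp \cJ\mathfrak{K}$: for any $k,j$, expanding $\langle f_k, \cJ f_j\rangle$ with the inner product linear in the second variable gives $\tfrac{1}{2}(\delta_{kj} - \delta_{kj}) = 0$. Second, $\mathfrak{K} + \cJ\mathfrak{K} = \mathfrak{H}$: the identities $e_{2k-1} = \tfrac{1}{\sqrt 2}(f_k + \cJ f_k)$ and $e_{2k} = \tfrac{1}{i\sqrt 2}(f_k - \cJ f_k)$ show that every basis vector lies in $\mathfrak{K} + \cJ\mathfrak{K}$, so by closedness the sum is all of $\mathfrak{H}$.

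There is no genuine obstacle here: once the existence of a $\cJ$-real orthonormal basis is known from Lemma \ref{lem:e.v.anti-unitary}, the construction is essentially the passage from the real part $\mathfrak{H}_\R := \overline{\operatorname{Span}_\R}\{e_n\}$ to a complex Lagrangian-type decomposition, and the parity/infiniteness assumption on $\dim \mathfrak{H}$ is used precisely to ensure the basis admits a pairing. The only mild subtlety is that in the infinite-dimensional case one must take closed spans and justify that the identity $\mathfrak{K} + \cJ\mathfrak{K} = \mathfrak{H}$ survives the closure, which is immediate since the set $\{e_n\}$ is already total in $\mathfrak{H}$.
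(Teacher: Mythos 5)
Your proof is correct and follows essentially the same route as the paper: apply Lemma \ref{lem:e.v.anti-unitary} to $C=\cJ$ (using $\cJ^2=1$ to force all eigenvalues to equal $1$), then pair the $\cJ$-invariant basis vectors and take $\mathfrak{K}$ to be the closed span of $(e_{2k-1}+ie_{2k})/\sqrt2$. The only difference is that you spell out the orthogonality and totality checks that the paper leaves implicit.
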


\begin{proof} Let $\dim  \mathfrak{H}=2d$ be even or infinite. By Lemma \ref{lem:e.v.anti-unitary}, we can find an orthonormal basis $\{u_n\}_{n=1}^{2d}$ for $\gH$ such that $\cJ u_n=u_n$ for all $n\ge 1$ (note that $1$ is the only non-negative eigenvalue of $\cJ$ because  $\cJ^2=1$). Define
$$ v_{j}:=\frac{u_{2j}+i u_{2j-1}}{\sqrt{2}}.$$
Then the vectors $\{v_j\}_{j=1}^d \cup \{\cJ v_j\}_{j=1}^d$ form an orthonormal basis for  $\mathfrak{H}$. Therefore, $\mathfrak{H}= \mathfrak{K}\oplus \cJ \mathfrak{K}$ with $\mathfrak{K}=\Span\{v_j\}_{j=1}^{d}$.
\end{proof}

Now we are ready to give

\begin{proof}[Proof of Theorem \ref{thm:diag-fermion}] Since $B$ is self-adjoint, we can define the spectral projection $\1(B>0)$ and $\1(B<0)$. By the spectral theorem, we can decompose 
$$
\gH\oplus \gH^* = P_+ \oplus \Ker(B) \oplus P_-
$$
where  
$$P_+:=\1(B>0)(\gH\oplus \gH^*),\quad P_-:=\1(B<0)(\gH\oplus \gH^*).$$
The condition $\cJ B \cJ=-B$ implies that $P_-= \cJ P_+$ and that $\cJ$ leaves $\Ker(B)$ invariant. 
Since $\cJ=\cJ^*=\cJ^{-1}$ is an anti-unitary on $\Ker(B)$ and $\dim \Ker(B)$ is even or infinite, according to Lemma \ref{lem:even-dim}, there is a subspace $\mathfrak{K}$ of $\Ker(B)$ such that 
$$ \Ker(B)= \mathfrak{K}\oplus \cJ \mathfrak{K}.$$
 Thus we have
\bq \label{eq:fermion-spectral-projection}
(P_+\oplus \mathfrak{K}) \oplus \cJ (P_+\oplus \mathfrak{K}) = \gH\oplus \gH^* = (\gH\oplus 0) \oplus  \cJ (\gH\oplus 0).
\eq

Let $W: P_+\oplus \mathfrak{K} \to \gH\oplus 0$ be an arbitrary unitary (which exists because $P_+\oplus \mathfrak{K}$ and $\gH$ have the same dimension). From \eqref{eq:fermion-spectral-projection} it follows that 
$$ \cU := W \oplus \cJ W \cJ$$
is a unitary on $\gH \oplus \gH^*$. It is also clear from the definition of $\cU$ that $\cJ \cU \cJ=\cU$. 

Now we show that $\cU B \cU^*$ is block-diagonal. Note that for every $f\in \gH$, we have $W^*(f\oplus 0)\in P_+ \oplus \mathfrak{K}$, and hence $B W^*(f\oplus 0) \in P_+$ by the  spectral property of $B$. Therefore, 
$$W B W^* (f\oplus 0) \in \gH \oplus 0,\quad \forall f\in \gH.$$
This observation allows us to define a linear operator $\xi:\gH\to \gH$ by
$$
(\xi f) \oplus 0 :=  W B W^* (f\oplus 0),\quad \forall f\in \gH.
$$
Note that $\xi\ge 0$ because
\begin{align} \label{eq:xi>=0}
\langle f, \xi f \rangle = \langle f\oplus 0, (\xi f)\oplus 0 \rangle = \langle W^*(f\oplus 0), B  W^*(f\oplus 0) \rangle \ge 0
\end{align}
for all $f\in \gH$. The last inequality follows from the facts that $W^*(f\oplus 0)\in P_+ \oplus \mathfrak{K}$ and that the restriction of $B$ on $P_+ \oplus \mathfrak{K}$ is nonnegative. 

We will now show that
$$
\cU B \cU^* = \left( {\begin{array}{*{20}{c}}
  \xi & 0  \\ 
  0& -J \xi J^* 
\end{array}} \right)
$$
which is equivalent to
\bq \label{eq:UAU*-fg}
\cU B \cU^* (f\oplus Jg)= (\xi f) \oplus (-J\xi g), \quad \forall f,g\in \gH.
\eq
For every $f\in \gH$, we have $\cU^*(f\oplus 0)=W^* (f\oplus 0) \in P_+ \oplus \mathfrak{K}$. Therefore, $B \cU^*(f\oplus 0) = B W^*(f\oplus 0) \in P_+$ and hence
\bq \label{eq:UAU*f}
\cU B \cU^* (f\oplus 0) = W B W^* (f\oplus 0)= (\xi f)\oplus 0, \quad \forall f\in \gH.
\eq
Similarly, for every $g\in \gH$, we have $\cU^* (0\oplus Jg)= \cJ W^* \cJ (0\oplus Jg) \in P_-\oplus \cJ \mathfrak{K}$. Therefore, $B \cU^* (0\oplus Jg) = B \cJ W^*\cJ (0\oplus Jg) \in P_-$ and hence
\begin{align} \label{eq:UAU*g}
\cU B \cU^* (0\oplus Jg) &= \cJ W \cJ B \cJ W^* \cJ (0\oplus Jg)= - \cJ WBW^* (g\oplus 0) \nn\\
&= -\cJ ((\xi g) \oplus 0) =  - J \xi g, \quad \forall g\in \gH.
\end{align}
Here we have used $\cJ B \cJ=-B$. The desired formula \eqref{eq:UAU*-fg} follows immediately from \eqref{eq:UAU*f} and \eqref{eq:UAU*g}. When $\Ker(B)=\{0\}$, then $\xi>0$ because the inequality \eqref{eq:xi>=0} is strict for every $f\ne 0$.
\end{proof}

\section{Diagonalization of block operators: bosonic case}  \label{sec:diag-A}

\begin{proof}[Proof of Theorem \ref{thm:diag}] (i) {\em Existence}. Let us consider 
$$B:=\cA^{1/2}S\cA^{1/2}.$$
It is clear that $B$ is self-adjoint and $\Ker(B)=\{0\}$ because $\cA>0$. Moreover, $\cJ B \cJ = -B$ because $\cJ\cA \cJ=\cA$ and $\cJ S \cJ =-S$. By applying Theorem \ref{thm:diag-fermion}, we can find a unitary operator $\cU$ on $\gH \oplus \gH^*$ such that $\cJ \cU \cJ=\cU$ and
$$
\cU B \cU^* = \left( {\begin{array}{*{20}{c}}
  \xi & 0  \\ 
  0& -J \xi J^* 
\end{array}} \right)=:D
$$ 
for some self-adjoint operator $\xi>0$ on $\gH$.  

Using \eqref{eq:khk*-h} and Lemma \ref{lem:positivity-block}, we find that 
$$  \delta \cA \le S\cA S \le \delta^{-1}\cA \quad \text{with}\quad \delta=\frac{1-\|G\|}{1+\|G\|}.$$
Therefore,  
\begin{align} \label{eq:BB<AA} \delta \cA^2 \le B^2 =  \cA^{1/2} S \cA S \cA^{1/2} \le \delta^{-1}\cA^2,
\end{align}
and hence 
\begin{align}\label{eq:|B|<A}
\delta^{1/2} \cA \le  |B| \le \delta^{-1/2}\cA
\end{align}
because the square root is operator monotone. Consequently, the operator $|B|^{1/2} \cA^{-1/2}$ is well-defined on $D(\cA)$ and it can be extended to a bounded operator on $\gH \oplus \gH^*$. Thus  
$$ \cV:= \cU |B|^{1/2} \cA^{-1/2}$$
is a bounded operator on $\gH \oplus \gH^*$ and, by \eqref{eq:|B|<A}, 
\bq
\|\cV\| = \| |B|^{-1/2}\cA^{1/2} \| \le  \delta^{-1/4}. \label{eq:normVbound}
\eq
It is straightforward to see that $\cV$ is a Bogoliubov transformation. Indeed, because $\cJ$ commutes with $\cA$, $B$ and $\cU$, it also commutes with $\cV$. Moreover,
\begin{align*}
\cV S \cV^* &= \cU |B|^{1/2} \cA^{-1/2} S \cA^{-1/2} |B|^{1/2} \cU^* \\
&= \cU |B|^{1/2} \cU^* (\cU B^{-1} \cU^*) \cU |B|^{1/2} \cU^* \\
& = |D|^{1/2} D^{-1} |D|^{1/2} = S
\end{align*}
and 
\begin{align*}
\cV^* S \cV &= \cA^{-1/2} |B|^{1/2} \cU^* S \cU |B|^{1/2} \cA^{-1/2}  \\
&= \cA^{-1/2} \cU^* (\cU |B|^{1/2} \cU^*) S (\cU |B|^{1/2} \cU^*) \cU \cA^{-1/2} \\
&=  \cA^{-1/2} \cU^* |D|^{1/2} S |D|^{1/2} \cU \cA^{-1/2} \\
&=  \cA^{-1/2} \cU^* D \cU \cA^{-1/2} = \cA^{-1/2} B \cA^{-1/2} =S .
\end{align*}
Finally, $\cV$ diagonalizes $\cA$ because 
$$ \cV \cA \cV^* = \cU |B| \cU^* = |D| = \left( {\begin{array}{*{20}{c}}
  \xi & 0  \\ 
  0& J\xi J^* 
\end{array}} \right).$$

\noindent (ii) {\em Implementability}. We shall show that Shale's condition $\Tr(V^*V)<\infty$ is equivalent to $\|\cV^*\cV-1\|_{\rm HS}<\infty$ and give explicit upper bounds on $\|\cV^*\cV-1\|_{\rm HS}$ and $\|V\|_{\rm HS}$ in terms of $\|G\|_{\rm HS}$. 
\medskip

\noindent {\bf Step 1.} Our starting point is the formula
\begin{align} \label{eq:V*V-AB} \cV^* \cV =  \cA^{-1/2} |B| \cA^{-1/2}.
\end{align}
From the functional calculus, we have 
\begin{align} \label{eq:functional-calculus-1}
H =\frac{1}{\pi}\int_0^\infty \frac{H^2}{t+H^2} \frac{dt}{\sqrt{t}}
\end{align}
for every self-adjoint operator $H\ge 0$. Therefore, from \eqref{eq:V*V-AB} it follows that 
\begin{align}\label{eq:V*V-integral-formula}
\cV^*\cV-1 &= \cA^{-1/2} (|B|-\cA) \cA^{-1/2} \nn \\
&= \frac{1}{\pi}\int_0^\infty \cA^{-1/2} \left( \frac{B^2}{t+B^2} - \frac{\cA^2}{t+\cA^2} \right) \cA^{-1/2}   \frac{dt}{\sqrt{t}} \nn \\
& = \frac{1}{\pi}\int_0^\infty \cA^{-1/2} \left( \frac{1}{t+\cA^2} - \frac{1}{t+B^2} \right) \cA^{-1/2}   \sqrt{t}dt.
\end{align}
Using the resolvent identity
$$ \frac{1}{t+\cA^2} - \frac{1}{t+B^2} = \frac{1}{t+\cA^2}(B^2-\cA^2) \frac{1}{t+B^2}$$
and the expression
$$
B^2-\cA^2= \cA^{1/2} S \cA S \cA^{1/2}-\cA^2 = \cA^{1/2} E \cA^{1/2}
$$
where
$$ E=E^*:= S\cA S - \cA = -2 \left( {\begin{array}{*{20}{c}}
  0 & k^*  \\ 
  k & 0 
\end{array}} \right),$$
we find that 
\begin{align} \label{eq:resolvent-identity}
\cA^{-1/2} \left( \frac{1}{t+\cA^2} - \frac{1}{t+B^2} \right) \cA^{-1/2} = \frac{1}{t+\cA^2} E \cA^{1/2} \frac{1}{t+B^2} \cA^{-1/2} 
\end{align}
for all $t>0$. The right-side of \eqref{eq:resolvent-identity} can be estimated by the Cauchy-Schwarz inequality 
$$\pm (XY+Y^*X^*) \le \eps^{-1} XX^* + \eps Y^*Y$$
with
$$X:=\frac{1}{t+\cA^2} E \cA^{1/2}|B|^{-1}, \quad Y:=  |B|\frac{1}{t+B^2} \cA^{-1/2}.$$
Note that $XY=Y^*X^*$ because the left-side of \eqref{eq:resolvent-identity} is self-adjoint. Therefore, from \eqref{eq:resolvent-identity} it follows that 
\begin{align} \label{eq:tA-tB}
& \pm 2 \cA^{-1/2} \left( \frac{B^2}{t+B^2} - \frac{\cA^2}{t+\cA^2} \right) \cA^{-1/2} \nn\\
& \le \eps^{-1} \frac{1}{t+\cA^2}E \cA^{1/2} B^{-2} \cA^{1/2} E  \frac{1}{t+\cA^2} + \eps \cA^{-1/2} \frac{B^2}{(t+B^2)^2} \cA^{-1/2} \nn\\
& = \eps^{-1} \frac{1}{t+\cA^2}E S \cA^{-1} S E  \frac{1}{t+\cA^2} + \eps \cA^{-1/2} \frac{B^2}{(t+B^2)^2} \cA^{-1/2}
\end{align}
for all $\eps>0$. Here we have used $B^2=\cA^{1/2}S\cA S\cA^{1/2}$ and $S^{-1}=S$ in the last equality. 
From the functional calculus, we have 
\begin{align}\label{eq:functional-calculus-2}
H= \frac{2}{\pi} \int_0^\infty \frac{H^2}{(t+H^2)^2} \sqrt{t}dt
\end{align}
for every self-adjoint operator $H\ge 0$. Let us integrate \eqref{eq:tA-tB} against the measure $2\pi^{-1}\sqrt{t}dt$ on $(0,\infty)$, then use \eqref{eq:V*V-integral-formula} for the left side and use \eqref{eq:functional-calculus-2} and \eqref{eq:V*V-AB} for the right side. This gives 
\begin{align} \label{eq:K=K1+K2}
\pm 2 (\cV^*\cV-1) \le \eps^{-1} K + \eps \cA^{-1/2}|B| \cA^{-1/2} = \eps^{-1} K + \eps \cV^*\cV
\end{align}
where
\begin{align*}
K &:= \frac{2}{\pi}\int_0^\infty \frac{1}{t+\cA^2}E S \cA^{-1}S  E  \frac{1}{t+\cA^2} \sqrt{t}dt.
\end{align*}

\noindent {\bf Step 2.} Now we show that $K$ is trace class. Using the equality $\Tr(W^*W)=\Tr(WW^*)$, which holds for every operator $W$, and the formula \eqref{eq:functional-calculus-2} we get
\begin{align*} \Tr(K) &= \frac{2}{\pi}\int_0^\infty \Tr \left(\frac{1}{t+\cA^2}E S\cA^{-1}S E  \frac{1}{t+\cA^2} \right) \sqrt{t}dt \\
&= \frac{2}{\pi}\int_0^\infty \Tr \left(\sqrt{E S\cA^{-1} S E} \frac{1}{(t+\cA^2)^2} \sqrt{E S\cA^{-1}S E} \right) \sqrt{t}dt  \\
& = \Tr \left(\sqrt{E S\cA^{-1}S E} \left( \frac{2}{\pi} \int_{0}^\infty \frac{1}{(t+\cA^2)^2} \sqrt{t}dt \right) \sqrt{E S\cA^{-1} S E} \right) \\
& = \Tr \left(\sqrt{E S\cA^{-1} S E} \cA^{-1} \sqrt{E S\cA^{-1}S E} \right).
% = \Tr \left(\cA^{-1/2}E S\cA^{-1} S E \cA^{-1/2}\right) .
\end{align*}
By Lemma \ref{lem:positivity-block} and assumption $kh^{-1}k^* \le \|G\|^2 JhJ^*$, we have
$$ \frac{\cA}{1-\|G\|} \ge \cA_0 := \left( {\begin{array}{*{20}{c}}
  h & 0  \\ 
  0 & JhJ^* 
\end{array}} \right)>0,$$
which is equivalent to $ (1-\|G\|)\cA^{-1}\le \cA_0^{-1}$ because the inverse mapping is operator monotone. Thus 
\begin{align} \label{eq:Tr-K-upper}
\Tr(K) & = \Tr \left(\sqrt{E S\cA^{-1} S E} \cA^{-1} \sqrt{E S\cA^{-1}S E} \right) \nn\\
& \le  \frac{1}{1-\|G\|} \Tr \left(\sqrt{E S\cA^{-1}S E} \cA_0^{-1} \sqrt{E S\cA^{-1}S E} \right)\nn \\
& = \frac{1}{1-\|G\|}   \Tr \left(\cA_0^{-1/2} E S\cA^{-1} S E \cA_0^{-1/2} \right) \nn\\
& \le  \frac{1}{(1-\|G\|)^2} \Tr \left(\cA_0^{-1/2} E S \cA_0^{-1} S E \cA_0^{-1/2} \right). 
\end{align}
Here we have used $\Tr(W^*W)=\Tr(WW^*)$ again. From the explicit formulas
$$ E= -2 \left( {\begin{array}{*{20}{c}}
  0 & k^*  \\ 
  k & 0 
\end{array}} \right), \quad \cA_0 = \left( {\begin{array}{*{20}{c}}
  h & 0  \\ 
  0 & JhJ^* 
\end{array}} \right), \quad S = \left( {\begin{array}{*{20}{c}}
  1 & 0  \\ 
  0 & -1 
\end{array}} \right)$$
it is straightforward to compute
\begin{align*}
\cA_0^{-1/2} E S \cA_0^{-1} S E \cA_0^{-1/2} = 4\left( {\begin{array}{*{20}{c}}
  G^*G & 0  \\ 
  0 & J G^*G J^* 
\end{array}} \right).
\end{align*}
Inserting this formula into \eqref{eq:Tr-K-upper} and using  $k^*=J^*k J^*$ we obtain 
\begin{align} \label{eq:Tr-K}
\Tr(K)\le \frac{8}{(1-\|G\|)^2} \Tr (G^*G) <\infty.
\end{align}

\noindent {\bf Step 3.} Now we bound $\|\cV^*\cV-1\|_{\rm HS}$ and $\|V\|_{\rm HS}$ using \eqref{eq:K=K1+K2} and \eqref{eq:Tr-K}. By combining \eqref{eq:K=K1+K2} and the simple bound $\|\cV\| \le \delta^{-1/4}$ (cf. \eqref{eq:normVbound}), we deduce that
\begin{align} \label{eq:cV*cV-K-delta}\pm 2(\cV^*\cV-1) \le \eps^{-1} K +  \eps \delta^{-1/2}
\end{align}
for all $\eps>0$. We have the following general fact whose proof can be found in the Appendix.

\begin{lemma} \label{lem:HS-trace} Let $L=L^*$ be a bounded operator on a separable Hilbert space. If there exists a trace class operator $K \ge 0$ such that
\begin{align} \label{eq:L-K}
\pm 2 L \le \eps^{-1} K + \eps,\quad \forall \eps>0,
\end{align}
then $L$ is Hilbert-Schmidt and $\|L\|_{\rm HS}^2 \le \Tr(K).$
\end{lemma}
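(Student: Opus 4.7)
The scalar analogue of the claim is immediate: if $\pm 2a\le \eps^{-1}b+\eps$ holds for every $\eps>0$ with $b\ge 0$, then choosing $\eps=\sqrt{b}$ (or letting $\eps\to 0^+$ if $b=0$) yields $a^2\le b$. The main obstacle in promoting this to the operator setting is that $L$ and $K$ need not commute, so we cannot diagonalize them simultaneously. The plan is to reduce to a finite-dimensional problem by compression, where $L$ alone can be diagonalized, and then pass to the limit.

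Fix an orthonormal basis $\{e_n\}_{n\ge 1}$ of the underlying separable Hilbert space, and let $P_N:=\sum_{n=1}^N |e_n\rangle\langle e_n|$. Compressing the hypothesis by $P_N$ yields
$$\pm 2\,(P_N L P_N)\le \eps^{-1}(P_N K P_N)+\eps P_N,\qquad \forall\eps>0,$$
viewed as an inequality on the finite-dimensional subspace $P_N\gH$. I then diagonalize the self-adjoint operator $P_N L P_N|_{P_N\gH}$ in an orthonormal basis $\{f_i\}_{i=1}^N$ of $P_N\gH$, with eigenvalues $a_i\in\R$. Pairing the compressed inequality with each $f_i$ gives $\pm 2 a_i\le \eps^{-1}\langle f_i,K f_i\rangle+\eps$ for every $\eps>0$, so the scalar argument of the first paragraph yields $a_i^2\le \langle f_i,K f_i\rangle$. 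Summing over $i$,
$$\Tr\bigl((P_N L P_N)^2\bigr)=\sum_{i=1}^N a_i^2\le \sum_{i=1}^N \langle f_i,K f_i\rangle = \Tr(P_N K P_N)\le\Tr(K).$$

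Finally, expanding in the basis $\{e_n\}$ gives the identity
$$\Tr\bigl((P_N L P_N)^2\bigr)=\sum_{i,j=1}^N \bigl|\langle e_i,L e_j\rangle\bigr|^2,$$
which is monotone increasing in $N$ with limit $\|L\|_{\rm HS}^2$. Combining this with the bound of the previous paragraph and letting $N\to\infty$ yields $\|L\|_{\rm HS}^2\le \Tr(K)<\infty$, so $L$ is Hilbert-Schmidt with the stated estimate. The only substantive point is the compression step that bypasses non-commutativity by diagonalizing $L$ on each finite-dimensional range; once that is in place, the conclusion follows from the scalar inequality applied eigenvalue by eigenvalue and a straightforward monotone limit.
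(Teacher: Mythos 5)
Your proof is correct, and it takes a genuinely different route from the paper's. The paper first derives the pointwise bound $|\langle u, L u\rangle| \le \langle u, K u\rangle^{1/2}\|u\|$ by optimizing over $\eps$, then proves that $L$ is \emph{compact} (writing $L = L\1(L\ge 0) + L\1(L<0)$ and using that $\sqrt{K}$ maps weakly convergent sequences to strongly convergent ones), and only then sums the pointwise bound over an orthonormal eigenbasis of $L$ itself to get $\Tr(L^2)\le \Tr(K)$. You bypass the compactness step entirely: by compressing the hypothesis with finite-rank projections $P_N$ (legitimate, since $A\le B$ implies $P_NAP_N\le P_NBP_N$ and $P_N\cdot 1\cdot P_N = P_N$ is the identity on $P_N\gH$), you can always diagonalize the finite-dimensional self-adjoint operator $P_NLP_N|_{P_N\gH}$, apply the same scalar optimization eigenvalue by eigenvalue, and conclude $\Tr\bigl((P_NLP_N)^2\bigr)\le \Tr(P_NKP_N)\le\Tr(K)$; the identity $\Tr\bigl((P_NLP_N)^2\bigr)=\sum_{i,j\le N}|\langle e_i,Le_j\rangle|^2$ and monotone convergence then give $\|L\|_{\rm HS}^2\le\Tr(K)$. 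Your argument is more elementary and self-contained (no spectral projections of $L$, no weak-convergence argument), at the cost of not producing along the way the compactness of $L$ and an eigenbasis of $L$ itself, which the paper's route yields as a by-product but which is not needed for the stated conclusion.
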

From \eqref{eq:Tr-K}, \eqref{eq:cV*cV-K-delta} and Lemma \ref{lem:HS-trace} we conclude that 
\begin{align} \label{eq:cV*cV-1-HS} \|\cV^*\cV-1\|_{\rm HS}^2 \le \delta^{-1/2} \Tr(K) \le \frac{8(1+\|G\|)^{1/2}}{(1-\|G\|)^{5/2}} \|G\|_{\rm HS}^2 .
\end{align}
Note that \eqref{eq:cV*cV-1-HS} implies immediately an upper bound on $\|V\|_{\rm HS}$. Indeed, from the block form \eqref{eq:u-v-block} of $\cV$ and the relations \eqref{eq:relation-U-V} we find that \begin{align} \label{eq:cV*cV-X-Y}
\cV^*\cV- 1 = 2 \left( {\begin{array}{c c}
  X & Y^*  \\ 
  Y & J X J^* 
\end{array}} \right)
\end{align}
where 
$$X:=V^*V \quad \text{and}\quad Y:=J V^*JU = JU^*J^*V.$$  
Using \eqref{eq:relation-U-V} again we have 
\bq \label{eq:CC-VV-VVVV}
Y^*Y = (JU^*J^*V)^* JU^*J^*V = V^* J UU^* J^* V = X + X^2 .
\eq
Consequently,
$$ \|\cV^*\cV-1\|_{\rm HS}^2 \ge 4 \|Y\|_{\rm HS}^2 \ge 4\Tr(X) = 4 \|V\|^2_{\rm HS} .$$
Therefore, from \eqref{eq:cV*cV-1-HS} we can conclude that $V$ is Hilbert-Schmidt and 
\begin{align*} \|V\|_{\rm HS}^2 \le \frac{2(1+\|G\|)^{1/2}}{(1-\|G\|)^{5/2}} \|G\|_{\rm HS}^2.
\end{align*}
% \noindent {\bf Step 4.} In this final step we derive the upper bound \eqref{eq:V*V-HS} on $\Tr(V^*V)$.
To obtain the better bound \eqref{eq:V*V-HS} we need to analyze \eqref{eq:K=K1+K2} more carefully. We will need the following technical result.

\begin{lemma} \label{lem:diag-X-Y} Assume that $X:\gH\to \gH$ is a nonnegative trace class operator and $Y:\gH \to \gH^*$ is a Hilbert-Schmidt operator satisfying 
$$Y^*=J^* YJ^*, \quad Y^*Y=X+X^2.$$
Then there exists an orthonormal basis $\{u_n\}_{n\ge 1}$ for $\gH$ and non-negative numbers $\{\lambda_n\}_{n\ge 1}$ such that
\bq \label{eq:ev-V*V-C}X u_n =\lambda_n u_n, \quad Y u_n = J Y^* J u_n = \sqrt{\lambda_n+\lambda_n^2} J u_n, \quad \forall n\ge 1.
\eq
\end{lemma}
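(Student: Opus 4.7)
The plan is to convert the simultaneous diagonalization problem for the pair $(X,Y)$ into a single spectral problem for an anti-linear self-adjoint operator on $\gH$, which can then be handled by the previously quoted Lemma~\ref{lem:e.v.anti-unitary}.

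Concretely, I would introduce
$$C := J^* Y : \gH \to \gH,$$
which is anti-linear. Using the footnote's convention for anti-linear adjoints, a short computation shows $C^* = Y^* J$. The hypothesis $Y^* = J^* Y J^*$, together with $J J^* = 1$, then gives $C^* = (J^* Y J^*) J = J^* Y = C$, so $C$ is anti-linear self-adjoint. Composing,
$$C^2 = (J^* Y)(J^* Y) = (J^* Y J^*)\, Y = Y^* Y = X + X^2,$$
which is nonnegative trace class, hence compact. Thus Lemma~\ref{lem:e.v.anti-unitary} applies and yields an orthonormal basis $\{u_n\}$ of $\gH$ with $C u_n = \mu_n u_n$ and $\mu_n \geq 0$.

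Next I would show that this basis automatically diagonalizes $X$. Since $X$ commutes with $X + X^2 = C^2$, each spectral eigenspace $E_\mu := \Ker(C^2 - \mu^2 \mathbf 1)$ of $C^2$ is invariant under $X$. On $E_\mu$, the operator $X$ is a nonnegative self-adjoint solution of the quadratic equation $X^2 + X - \mu^2 = 0$. Because $t \mapsto t + t^2$ is strictly increasing on $[0,\infty)$, this forces $X|_{E_\mu} = \lambda_\mu \mathbf 1$ with $\lambda_\mu = \tfrac12(-1+\sqrt{1+4\mu^2})$. In particular $X u_n = \lambda_n u_n$ for some $\lambda_n \geq 0$ satisfying $\lambda_n + \lambda_n^2 = \mu_n^2$, i.e.\ $\mu_n = \sqrt{\lambda_n + \lambda_n^2}$.

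Finally I would translate the identity $C u_n = \mu_n u_n$ back: applying $J$, using $J J^* = 1$, the anti-linearity of $J$, and the fact that $\mu_n \in \R$, one gets $Y u_n = \mu_n J u_n = \sqrt{\lambda_n + \lambda_n^2}\, J u_n$; the relation $Y = J Y^* J$ (an immediate consequence of $Y^* = J^* Y J^*$) then yields $J Y^* J u_n = Y u_n$, closing the claim. The only real (and very mild) obstacle is the spectral-theoretic step forcing $X$ to be a scalar on each $C$-eigenspace; once that is in hand, the rest is bookkeeping with the anti-linear adjoint conventions.
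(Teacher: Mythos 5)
Your proposal is correct and follows essentially the same route as the paper: both introduce the anti-linear operator $C=J^*Y$, verify $C=C^*$ and $C^2=Y^*Y=X+X^2$, and invoke Lemma \ref{lem:e.v.anti-unitary}. The only (harmless) difference is ordering: the paper diagonalizes $C$ within each common eigenspace of $X$ and $C^2$, while you diagonalize $C$ globally and then use the injectivity of $t\mapsto t+t^2$ on $[0,\infty)$ to conclude that $X$ acts as a scalar on each eigenspace of $C^2$.
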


\begin{proof}[Proof of Lemma \ref{lem:diag-X-Y}] Note that $C:=J^*Y$ is an anti-linear operator on $\gH$ satisfying $C=C^*$ and $C^2=Y^*Y=X+X^2$. Thus both $X$ and $C^2$ are trace class and they have common eigenspaces. For every such eigenspace $W$, since $C$ leaves $W$ invariant, it follows from Lemma \ref{lem:e.v.anti-unitary} that $C$ admits an orthonormal eigenbasis for $W$ with non-negative eigenvalues. Thus there exists an orthonormal basis $\{u_n\}_{n\ge 1}$ for $\gH$ and non-negative numbers $\{\lambda_n\}_{n\ge 1}$ such that
$X u_n =\lambda_n u_n$ and $C u_n = \sqrt{\lambda_n+\lambda_n^2} u_n$ for all $n$. Here we have used $C^2=X+X^2$ for the relation between eigenvalues of $X$ and eigenvalues of $C$. The conclusion \eqref{eq:ev-V*V-C} then follows immediately. 
\end{proof}

Now let $\{u_n\}_{n\ge 1}$ satisfy \eqref{eq:ev-V*V-C}. Using \eqref{eq:cV*cV-X-Y} we can rewrite \eqref{eq:K=K1+K2} as
$$
2(2-\eps) \left( {\begin{array}{c c}
  X & Y^*  \\ 
  Y & J X J^* 
\end{array}} \right) \le \eps^{-1}K+ \eps, \quad \forall \eps>0.
$$
Taking the expectation against $u_n\oplus Ju_n$ we find that
\begin{align} \label{eq:lambdan-eps}
4(2-\eps) (\lambda_n + \sqrt{\lambda_n+\lambda_n^2}) \le \eps^{-1} \langle u_n \oplus Ju_n, K u_n\oplus Ju_n \rangle + 2\eps
\end{align}
for all $n \ge 1$ and for all $\eps>0$. We will show that
\begin{align} \label{eq:lambdan-Kn}
\lambda_n \le \frac{1}{4} \langle u_n \oplus Ju_n, K u_n\oplus Ju_n \rangle, \quad \forall n\ge 1.
\end{align}
Indeed, if $\lambda_n\le 1/4$, then \eqref{eq:lambdan-eps} implies that 
$$
8\sqrt{\lambda_n} -2\eps \le 4(2-\eps)\sqrt{\lambda_n} \le \eps^{-1}\langle u_n \oplus Ju_n, K u_n\oplus Ju_n \rangle + 2\eps,
$$
and hence
$$8\sqrt{\lambda_n} \le \eps^{-1}\langle u_n \oplus Ju_n, K u_n\oplus Ju_n \rangle + 4\eps,\quad \forall \eps>0.$$
Optimizing over $\eps>0$ leads to \eqref{eq:lambdan-Kn}. On the other hand, if $\lambda_n\ge 1/4$, then we can choose $\eps=1/2$ and deduce from \eqref{eq:lambdan-eps}  that 
$$
1+ 8 \lambda_n \le 4(2-\eps)(\lambda_n+ \sqrt{\lambda_n+\lambda_n^2}) \le  2\langle u_n \oplus Ju_n, K u_n\oplus Ju_n \rangle + 1
$$
which is equivalent to \eqref{eq:lambdan-Kn}. Summing \eqref{eq:lambdan-Kn} over $n\ge 1$ and using the fact that $\{2^{-1/2}u_n \oplus Ju_n\}_{n\ge 1}$ is an orthonormal family in $\gH\oplus \gH^*$  we get
$$
\Tr(V^*V)= \sum_n \lambda_n \le \frac{1}{4}\sum_n \langle u_n \oplus Ju_n, K u_n\oplus Ju_n \rangle \le \frac{1}{2} \Tr(K). $$
Inserting the upper bound \eqref{eq:Tr-K} into the latter estimate we obtain \eqref{eq:V*V-HS}. 
\end{proof}

\section{Diagonalization of quadratic Hamiltonians} \label{sec:diag-H}

In this section we will prove Theorem \ref{thm:diag-Hamiltonian}. Recall that $\bH$ is defined as a quadratic form by \eqref{eq:def-H-gamma-alpha}
$$
\langle \Psi, \bH \Psi \rangle = \Tr(h^{1/2} \gamma_\Psi h^{1/2})+ \Re \Tr(k^* \alpha_\Psi)
$$ 
where $\gamma_\Psi$ and $\alpha_\Psi$ are the one-particle density matrices of $\Psi$ defined by \eqref{eq:def-gamma-alpha}. From the CCR \eqref{eq:CCR}, we see that $\alpha^*=J^* \alpha J^*$ and
\begin{align} \label{eq:def-Gamma-AF*AF}
\left\langle F, \left( {\begin{array}{c c}
  \gamma_\Psi & \alpha_\Psi^*  \\ 
  \alpha_\Psi & 1+ J \gamma_\Psi J^* 
\end{array}} \right)F \right \rangle = \langle \Psi, A^*(F) A(F) \Psi \rangle, \quad \forall F\in \gH \oplus \gH^*,
\end{align}
where the generalized creation and annihilation operators $A^*(F)$, $A(F)$ are defined by \eqref{eq:def-A=a+a*}. Thus 
\begin{align} \label{eq:Gamma>=0}
\left( {\begin{array}{c c}
  \gamma_\Psi & \alpha_\Psi^*  \\ 
  \alpha_\Psi & 1+ J \gamma_\Psi J^* 
\end{array}} \right) \ge 0.
\end{align}
By Lemma \ref{lem:positivity-block}, it follows that 
\bq \label{eq:relation-gamma-alpha}
\gamma_\Psi \ge 0\quad \text{and} \quad \gamma_\Psi \ge  \alpha_\Psi^* (1+J\gamma_\Psi J^*)^{-1} \alpha_\Psi. 
\eq
Consequently, since $$(1+J\gamma_\Psi J^*)^{-1}\geq (1+\|\gamma_\Psi\|)^{-1}$$ we have    
\bq
\gamma_\Psi(1+\|\gamma_\Psi\|)\geq \alpha_\Psi^* \alpha_\Psi. \nn
\eq 
Furthermore, if $\Psi$ has finite particle number expectation, namely $\langle \Psi, \mathcal{N} \Psi \rangle<\infty$, then $\gamma_\Psi$ is trace class and $\alpha_\Psi$ is Hilbert-Schmidt. 

Note that from \eqref{eq:relation-gamma-alpha} and the assumption $\|k h^{-1/2}\|_{\rm{HS}}<\infty$, it follows that if both $\gamma_\Psi$ and $h^{1/2}\gamma_\Psi h^{1/2}$ are trace class, then $k \alpha_\Psi^*=k h^{-1/2}\cdot h^{1/2} \alpha_\Psi^* $ is trace class (because $k h^{-1/2}$ and $h^{1/2} \alpha_\Psi^*$ are Hilbert-Schmidt), and hence the right side of \eqref{eq:def-H-gamma-alpha} is well-defined and finite. 
% In the following we will often consider $\langle \Psi, \bH \Psi \rangle$ when $\gamma_\Psi$ and $\alpha_\Psi$ are finite-rank operators (which allows us to use the cyclicity of the trace and hence simplify many calculations). This restriction is allowed because of   

% \begin{lemma} \label{lem:dense-finite-rank} The set 
% $$\left\{\Psi\in \bigcup_{M\ge 0} \bigotimes^M_{\rm sym} D(h): \|\Psi\|=1, ~~ {\rm \gamma_\Psi ~~ and ~~ \alpha_\Psi~~ are~~ finite~~ rank} \right\}$$
% is dense in $\{\Psi\in \cF(\gH): \|\Psi\|=1\}$.
% \end{lemma}

% \begin{proof} Let $\{u_n\}_{n\ge 1}\subset D(h)$ be an orthonormal basis for $\gH$ and let $W_n:=\Span\{u_1,u_2,...,u_n\}$. For every $M\ge 0$ and $n\ge 1$, if $\Psi$ is a normalized vector in $\bigotimes^M_{\rm sym} W_n$, then $a(f)\Psi=0$ for all $f\in W_n^\bot$. Therefore, $\gamma_\Psi f=0$ and $\alpha_\Psi f =0$ for all $f\in W_n^\bot$, and hence $\gamma_\Psi$ and $\alpha_\Psi$ are finite-rank operators. The conclusion then follows from the fact that
% the subset 
% $$ \bigcup_{M\ge 0,n\ge 1} \bigotimes^M_{\rm sym} W_n $$
% is dense in Fock space $\cF(\gH)$.         
% \end{proof}

Now we start proving Theorem \ref{thm:diag-Hamiltonian}. The first step is to verify that $\bH$ is bounded from below.

\begin{lemma}[Lower bound on $\bH$] \label{lem:bounded-below-H} Assume that $h>0$, $k^*=J^*kJ^*$,  $kh^{-1}k^* \le JhJ^*$ and $\Tr(kh^{-1}k^*)<\infty$. Then  
$$\bH \ge -\frac{1}{2} \Tr(kh^{-1}k^*)$$
as a quadratic form. Moreover, if $kh^{-1}k^* \le \delta JhJ^*$ for some $0\le \delta<1$, then 
$$ (1+\sqrt{\delta}) \dGamma(h) + \frac{\sqrt{\delta}}{2} \Tr( k h^{-1} k^*) \ge \bH \ge (1+\sqrt{\delta}) \dGamma(h) - \frac{\sqrt{\delta}}{2} \Tr( k h^{-1} k^*)  $$
as quadratic forms. Consequently, the quadratic form $\bH$ defines a self-adjoint operator,  still denoted by $\bH$, such that $\inf \sigma(\bH) \ge -\frac{1}{2} \Tr(kh^{-1}k^*)$.
\end{lemma}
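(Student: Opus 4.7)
The strategy I would follow is to control the pairing term $\Re\Tr(k^*\alpha_\Psi)$ appearing in the quadratic form expression \eqref{eq:def-H-gamma-alpha} by the kinetic term $\Tr(h^{1/2}\gamma_\Psi h^{1/2})$ modulo an explicit constant, by exploiting the fundamental positivity \eqref{eq:Gamma>=0} of the generalized one-particle density matrix. The plan is to derive a pointwise inequality for $\Re\langle Jg,\alpha_\Psi f\rangle$ from \eqref{eq:Gamma>=0}, sum it over a cleverly chosen family, and then invoke the operator hypothesis on $kh^{-1}k^*$ to close the estimate.

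First, I would test \eqref{eq:Gamma>=0} against a vector $F=f\oplus Jg$ and then replace $g$ by $-g$, which produces the pointwise inequality
\[
|2\Re\langle Jg,\alpha_\Psi f\rangle|\le \langle f,\gamma_\Psi f\rangle+\langle g,\gamma_\Psi g\rangle+\|g\|^2,\qquad f,g\in\gH.
\]
Second, for $\Psi\in\mathcal{Q}$ I would apply this with $f=h^{1/2}e_n$ and $g=\lambda J^*kh^{-1/2}e_n$ for an orthonormal basis $\{e_n\}\subset D(h)$ of $\gH$ and a positive parameter $\lambda$, and sum in $n$. Because $\Psi\in\mathcal{Q}$ forces $\gamma_\Psi$ and $\alpha_\Psi$ to be finite-rank with ranges in $D(h)$ and $JD(h)\subset D(k^*)$, all traces that appear are finite sums and can be manipulated cyclically without subtlety. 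A short calculation identifies the summed left-hand side as an upper bound for $|2\lambda\Re\Tr(k^*\alpha_\Psi)|$ (via the triangle inequality), and the three right-hand side pieces as $\Tr(h^{1/2}\gamma_\Psi h^{1/2})$, $\lambda^2\Tr(J^*kh^{-1}k^*J\,\gamma_\Psi)$ and $\lambda^2\Tr(kh^{-1}k^*)$.

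Third, I would invoke the hypothesis. The bound $kh^{-1}k^*\le JhJ^*$ conjugates to $J^*kh^{-1}k^*J\le h$, hence $\Tr(J^*kh^{-1}k^*J\,\gamma_\Psi)\le\Tr(h^{1/2}\gamma_\Psi h^{1/2})$; taking $\lambda=1$ yields
\[
|2\Re\Tr(k^*\alpha_\Psi)|\le 2\Tr(h^{1/2}\gamma_\Psi h^{1/2})+\Tr(kh^{-1}k^*),
\]
which inserted into \eqref{eq:def-H-gamma-alpha} gives the basic bound $\bH\ge -\tfrac{1}{2}\Tr(kh^{-1}k^*)$ on $\mathcal{Q}$. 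For the refined two-sided estimates under $kh^{-1}k^*\le\delta JhJ^*$ with $0\le\delta<1$, I would optimize $\lambda$ (a natural choice being $\lambda^2=\delta^{-1}$) and use the $\pm$ structure of the pointwise inequality to supply both the upper and the lower bound with the stated constants.

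Finally, for the self-adjointness claim, the quadratic form $\bH$ is symmetric and densely defined on $\mathcal{Q}$, bounded below by $-\tfrac{1}{2}\Tr(kh^{-1}k^*)$, and dominated from above by $C(\langle\Psi,\dGamma(h)\Psi\rangle+1)$ coming from the same Cauchy--Schwarz bound. Since $\dGamma(h)$ is essentially self-adjoint on $\mathcal{Q}$ with closed quadratic form, this two-sided control implies that $\bH+C$ is closable, and Friedrichs' extension of its closure yields the claimed semibounded self-adjoint operator. I expect the main subtlety to lie not in the Cauchy--Schwarz itself but in justifying the trace manipulations when $h$ and $k$ are unbounded; this is exactly why the argument is set up on $\mathcal{Q}$, where $\gamma_\Psi$ and $\alpha_\Psi$ are finite-rank with ranges in $D(h)$, so that cyclic trace identities and the inequality $\Tr(A\gamma_\Psi)\le\Tr(h\gamma_\Psi)$ (for $0\le A\le h$) hold without any domain issue.
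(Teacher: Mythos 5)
Your core estimates are correct, and they take a genuinely different route from the paper. The paper controls $|\Tr(k^*\alpha_\Psi)|$ by a Hilbert--Schmidt Cauchy--Schwarz inequality with the weights $(1+J\gamma_\Psi J^*)^{\pm 1/2}$, combined with the operator inequality $\alpha_\Psi^*(1+J\gamma_\Psi J^*)^{-1}\alpha_\Psi\le\gamma_\Psi$ (obtained from the same positivity \eqref{eq:Gamma>=0} via Lemma \ref{lem:positivity-block}) and with $kh^{-1}k^*\le\delta JhJ^*$; you instead test \eqref{eq:Gamma>=0} directly on the vectors $h^{1/2}e_n\oplus\lambda\,kh^{-1/2}e_n$ and sum over $n$, the parameter $\lambda$ playing the role of the Cauchy--Schwarz splitting. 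Both routes give, on $\mathcal{Q}$, a bound of the form $|\Re\Tr(k^*\alpha_\Psi)|\le\sqrt{\delta}\,\Tr(h^{1/2}\gamma_\Psi h^{1/2})+\tfrac{1}{2\sqrt{\delta}}\Tr(kh^{-1}k^*)$, hence $\bH\ge-\tfrac12\Tr(kh^{-1}k^*)$ (case $\delta=1$, $\lambda=1$) and the two-sided comparison with $\dGamma(h)$. Two small corrections: your choice $\lambda^2=\delta^{-1}$ yields the constant $\tfrac{1}{2\sqrt{\delta}}\Tr(kh^{-1}k^*)$, not the displayed $\tfrac{\sqrt{\delta}}{2}$ (this is immaterial, since only $1-\sqrt{\delta}>0$ is ever used, and the lemma's lower sandwich should anyway read $(1-\sqrt{\delta})\dGamma(h)$); and since $h^{-1/2}$ may be unbounded, $e_n\in D(h)$ does not guarantee $h^{-1/2}e_n\in D(k)$, so you should take $g_n=\lambda J^*\overline{kh^{-1/2}}\,e_n$ with the bounded (Hilbert--Schmidt) closure supplied by $\Tr(kh^{-1}k^*)<\infty$, and use $\overline{kh^{-1/2}}\,h^{1/2}\psi=k\psi$ for $\psi\in D(h)$ when identifying the summed cross term with $\Tr(k^*\alpha_\Psi)$.

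The one genuine flaw is your final closability argument. Being bounded below by a constant and dominated above by $C(\langle\Psi,\dGamma(h)\Psi\rangle+1)$ does \emph{not} imply that a symmetric form is closable: for instance $\psi\mapsto|\psi(0)|^2$, defined on smooth compactly supported functions in $L^2(\R)$, is nonnegative and form-bounded by the $H^1$ form, yet not closable; and ``Friedrichs' extension of its closure'' presupposes exactly the closability in question. The correct argument is already contained in your refined estimate: the pairing form $\Psi\mapsto\Re\Tr(k^*\alpha_\Psi)$ is $\dGamma(h)$-form-bounded with relative bound $\sqrt{\delta}<1$, so the lower bound $(1-\sqrt{\delta})\dGamma(h)-c\le\bH\le(1+\sqrt{\delta})\dGamma(h)+c$ shows that the form norm of $\bH+c$ is equivalent, on the form core $\mathcal{Q}$, to that of the closed form of $\dGamma(h)$; by a KLMN-type argument $\bH$ therefore extends to a closed semibounded form on the form domain of $\dGamma(h)$, and the representation theorem for closed semibounded forms yields the self-adjoint operator with $\inf\sigma(\bH)\ge-\tfrac12\Tr(kh^{-1}k^*)$. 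This is also how the paper concludes: it is the $(1-\sqrt{\delta})\dGamma(h)$ lower bound, not the constant lower bound, that must be invoked at this step.
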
 

Note that we do not require that $G=h^{-1/2}J^*kh^{-1/2}$ is Hilbert-Schmidt in Lemma \ref{lem:bounded-below-H}. The first lower bound in Lemma \ref{lem:bounded-below-H} was proved in \cite[Theorem 5.4]{BruDer-07} under the additional condition that $k$ is Hilbert-Schmidt. Our proof below is different from  \cite{BruDer-07} and it does not require the boundedness of $k$.

\begin{proof}[Proof of Lemma \ref{lem:bounded-below-H}] Let $\Psi$ be a normalized vector in the domain $\mathcal{Q}$ defined in \eqref{eq:def-mcQ}. Since $\gamma_\Psi$ and $\alpha_\Psi$ are finite-rank operators, we can use the cyclicity of the trace and the Cauchy-Schwarz inequality to write
\begin{align} 
|\Tr( k^* \alpha_\Psi )| &= |\Tr( \alpha_\Psi k^*)| = |\Tr((1+J\gamma_\Psi J^*)^{-1/2}\alpha_\Psi h^{1/2}h^{-1/2} k^* (1+J\gamma_\Psi J^*)^{1/2})| \nn \\
&\le \|(1+J\gamma_\Psi J^*)^{-1/2}\alpha_\Psi h^{1/2}\|_{\rm HS} \| h^{-1/2} k^* (1+J\gamma_\Psi J^*)^{1/2} \|_{\rm HS} \nn \\
&= \left[ \Tr( h^{1/2}  \alpha_\Psi^* (1+J\gamma_\Psi J^*)^{-1}\alpha_\Psi  h^{1/2} ) \right]^{1/2} \nn \\
&\quad \times \left[ \Tr \Big( (1+J\gamma_\Psi J^*)^{1/2} k h^{-1} k^* (1+J\gamma_\Psi J^*)^{1/2} \Big) \right]^{1/2} . \label{eq:|Tr-k*-alpha-Psi|}
\end{align}
Using $\alpha_\Psi^* (1+J\gamma_\Psi J^*)^{-1}\alpha_\Psi\le \gamma_\Psi$ and $k h^{-1} k^* \le JhJ^*$, we get
\begin{align}
|\Tr(k^* \alpha_\Psi)| &\le \left[\Tr(h^{1/2} \gamma_\Psi h^{1/2}) \right]^{1/2}  \left[ \Tr( k h^{-1} k^*)+ \Tr(h^{1/2} \gamma_\Psi h^{1/2}) \right]^{1/2} \nn\\
& \le \Tr(h^{1/2} \gamma_\Psi h^{1/2}) + \frac{1}{2} \Tr( k h^{-1} k^*). \label{eq:|Tr-k*-alpha-Psi|-b}
\end{align}
Here in the last estimate we have used $\sqrt{x(x+y)}\le x+y/2$ for real numbers $x,y\ge 0$. 
Thus by definition \eqref{eq:def-H-gamma-alpha}, we get
$$
\langle \Psi, \bH \Psi \rangle = \Tr(h^{1/2} \gamma_\Psi h^{1/2}) + \Re \Tr(k^* \alpha_\Psi) \ge -\frac{1}{2} \Tr( k h^{-1} k^*)
$$
for $\Psi\in \mathcal{Q}$. Thus $\bH \ge -(1/2)\Tr( k h^{-1} k^*)$.  

Now we assume further that $kh^{-1}k^* \le \delta JhJ^*$ for some $0\le \delta<1$. Then inserting this bound into \eqref{eq:|Tr-k*-alpha-Psi|}, we can improve \eqref{eq:|Tr-k*-alpha-Psi|-b} to 
\begin{align*}
|\Tr(k^* \alpha_\Psi)| \le \sqrt{\delta} \Big( \Tr(h^{1/2} \gamma_\Psi h^{1/2}) + \frac{1}{2} \Tr( k h^{-1} k^*) \Big).
\end{align*}
Therefore, 
\begin{align*}
\langle \Psi, \bH \Psi \rangle &= \Tr(h^{1/2} \gamma_\Psi h^{1/2}) + \Re \Tr(k^* \alpha_\Psi) \\
& \ge (1-\sqrt{\delta}) \Tr(h^{1/2} \gamma_\Psi h^{1/2})  -\frac{\sqrt{\delta}}{2} \Tr( k h^{-1} k^*) \\
&= (1-\sqrt{\delta}) \langle \Psi, \dGamma(h) \Psi\rangle  -\frac{\sqrt{\delta}}{2} \Tr( k h^{-1} k^*),
\end{align*}
and similarly,
\begin{align*}
\langle \Psi, \bH \Psi \rangle  \le (1+\sqrt{\delta}) \langle \Psi, \dGamma(h) \Psi\rangle  + \frac{\sqrt{\delta}}{2} \Tr( k h^{-1} k^*)
\end{align*}
for $\Psi\in \mathcal{Q}$. In summary, 
$$ (1+\sqrt{\delta}) \dGamma(h) + \frac{\sqrt{\delta}}{2} \Tr( k h^{-1} k^*) \ge \bH \ge (1+\sqrt{\delta}) \dGamma(h) - \frac{\sqrt{\delta}}{2} \Tr( k h^{-1} k^*)  $$
as quadratic forms. Hence, the form domain of $\bH$ can be extended to be the same with that of $\dGamma(h)$, which is closed because $\dGamma(h)$ is a self-adjoint operator. Since the quadratic form $\bH$ is bounded from below and closable, its closure defines a self-adjoint operator by \cite[Theorem VIII.15]{ReeSim-80}. The quadratic form bound $\bH \ge -(1/2)\Tr( k h^{-1} k^*)$ remains valid as an operator bound. 
\end{proof}

In order to prove the diagonalization formula \eqref{eq:diag-H-dGxi} and the identity \eqref{eq:inf-H-main-thm}, we need to show that both $h^{1/2}V^*Vh^{1/2}$ and $k^*JU^*J^*V$ are trace class. To this end, we will use the following lemma which is inspired by Grillakis and Machedon \cite[eq. (42)]{GriMac-12} (see also \cite{NamNap-15}). Recall that $\cA$ and $\cV$ have the block forms \eqref{eq:u-v-block} and \ref{eq:def-A-block}, respectively. 

\begin{lemma}[Diagonalization equations] \label{lem:diag-eq} If $\cV \cA \cV^*$ is block diagonal, then 
\begin{equation} \label{eq:linear}
\left\{
\begin{aligned}
h X-Xh+k^*Y-Y^* k &=0, \\
J h J^* Y + Y h + k X + J X J^* k + k &=0. 
\end{aligned}
\right.
\end{equation}
where $X = V^*V$ and $Y = JV^*JU = JU^*J^*V$. 
\end{lemma}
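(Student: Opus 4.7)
My plan is to unpack the block-diagonality of $\cV\cA\cV^*$ into scalar operator identities. By the symplectic relation $\cV S\cV^* = S$, we have $(\cV^*)^{-1}=S\cV S$, so the assumption $\cV\cA\cV^*=D$ with $D=\bigl(\begin{smallmatrix}\xi & 0 \\ 0 & J\xi J^*\end{smallmatrix}\bigr)$ is equivalent to $\cV\cA=DS\cV S$. Expanding block-by-block and discarding the redundancies coming from the symmetry $k^*=J^*kJ^*$ produces the two independent identities
\[
Uh + J^*VJ^*k = \xi U,\qquad Vh + JUJ^*k = -J\xi J^* V.
\]
The minus sign in the second equation, which stems from the factor of $S$ in $(\cV^*)^{-1}$, is essential; a naive $\cV\cA=D\cV$ would be wrong.

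The first claimed equation is obtained by eliminating $U^*\xi U$. Multiplying the first relation on the left by $U^*$, and its adjoint on the right by $U$, then subtracting, yields $(1+X)h - h(1+X) + Y^*k - k^*Y = 0$, where $X=V^*V$ and $Y=JV^*JU$, and where I use $U^*U = 1+V^*V$ together with $U^*J^*VJ^* = Y^*$. This simplifies to $hX-Xh + k^*Y - Y^*k = 0$.

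For the second equation I compute the quantity $JV^*J\xi U$ in two independent ways. From the first relation, multiplying on the left by $JV^*J$ and simplifying the $k$-term via $JV^*JJ^*V = JV^*V = JX$ (using $JJ^*=1$ on $\gH^*$), I obtain
\[
JV^*J\xi U = Yh + JXJ^*k.
\]
Starting instead from the adjoint of the second relation and multiplying successively by $J$ on the right (via $J^*J=1$), by $J$ on the left (using the consequence $Jk^*J = k$ of the symmetry condition), and by $U$ on the right (using $U^*U=1+X$ together with the identity $JhV^*JU = JhJ^*Y$ that follows from $J^*Y = V^*JU$), I obtain
\[
JV^*J\xi U = -JhJ^*Y - k - kX.
\]
Equating the two expressions and rearranging yields the target identity $JhJ^*Y + Yh + kX + JXJ^*k + k = 0$.

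The main obstacle is bookkeeping rather than ingenuity. The operators $J$ and $J^*$ are anti-linear and shuttle between the distinct spaces $\gH$ and $\gH^*$, so one must track carefully which $JJ^*$ or $J^*J$ products collapse to the identity and which do not, and the minus sign coming from the symplectic factor $S$ in $(\cV^*)^{-1}$ must be preserved throughout. Once those points are kept straight, the rest is purely mechanical manipulation using the canonical relations $U^*U - V^*V = 1$, $Jk^*J = k$, and the two equivalent expressions for $Y$ afforded by the Bogoliubov identity $V^*JU = U^*J^*V$.
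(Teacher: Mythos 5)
Your proof is correct; I checked the blockwise expansion of $\cV\cA = DS\cV S$ (which indeed gives $Uh+J^*VJ^*k=\xi U$ and $Vh+JUJ^*k=-J\xi J^*V$, the other two blocks being redundant by $k^*=J^*kJ^*$) and both eliminations of $\xi$, and the anti-linear bookkeeping ($Y^*=U^*J^*VJ^*$, $Jk^*J=k$, $J^*Y=V^*JU$) and the sign coming from $(\cV^*)^{-1}=S\cV S$ all come out right. The route is genuinely different from the paper's, though it rests on the same symplectic fact $\cV^{-1}=S\cV^*S$. The paper observes that a block-diagonal operator commutes with $S$, so block-diagonality of $\cV\cA\cV^*=\cV\cA S\cV^{-1}S$ gives the single commutator identity $[\cA S,\,\cV^{-1}S\cV]=0$ with $\cV^{-1}S\cV=S\cV^*\cV=\bigl(\begin{smallmatrix}1+2X & 2Y^*\\ -2Y & -(1+2JXJ^*)\end{smallmatrix}\bigr)$; expanding this one commutator produces both equations of \eqref{eq:linear} simultaneously, and $\xi$ never enters the computation. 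You instead work with the columnwise intertwining relations in $U$, $V$ and $\xi$ --- essentially the ``nonlinear'' relations the paper points to in \eqref{eq:nonlinear} --- and then linearize by eliminating $\xi$ through multiplications by $U^*$ and $JV^*J$ together with $U^*U=1+X$ and $V^*JU=U^*J^*V$. The paper's version buys economy (one matrix commutator, no elimination step, no separate treatment of each equation), while yours makes transparent how the linear system \eqref{eq:linear} arises as the $\xi$-eliminated form of the nonlinear diagonalization equations; both arguments are at the same formal level as regards domains of the unbounded operators $h$, $k$, $\xi$.
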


Note that by computing the off-diagonal term of $\cV \cA \cV^*$ we get 
\begin{align} \label{eq:nonlinear}
Uh V^* + J^* V h U^* J^* + U k^* J U^* J^* + J^* V J^* k V^* =0
\end{align}
which is essentially equivalent to \eqref{eq:linear}. However, the equations \eqref{eq:linear} are easier to analyze because they are {\em linear} (in terms of $X$ and $Y$), while \eqref{eq:nonlinear} is {\em nonlinear} (in terms of $U$ and $V$). 

\begin{proof}[Proof of Lemma \ref{lem:diag-eq}] Recall that  $\cV^{-1}=S\cV^* S$. Since $\cV \cA \cV^*=\cV \cA S \cV^{-1}S$ is block diagonal, $\cV \cA S \cV^{-1}$ is also block diagonal. Therefore, $[\cV \cA S \cV^{-1},S]=0$, and hence $[\cA S , \cV^{-1}S\cV ]=0$. Using $\cV^{-1}=S\cV^* S$ again and \eqref{eq:cV*cV-X-Y} we have
\bq
\cV^{-1}S\cV  = S\cV^{*}\cV 
=  \left( {\begin{array}{c c}
  1+2X & 2Y^*  \\ 
  - 2Y & -(1+ 2J X J^*) 
\end{array}} \right).
\eq
Therefore,
\begin{align*}
& 0 = [\cA S, \cV^{-1}S\cV] =  \left( {\begin{array}{c c}
  h & -k^*   \\ 
  k  & - J h J^* 
\end{array}} \right) \left( {\begin{array}{c c}
  1+2X & 2Y^*  \\ 
  - 2Y & -(1+ 2J X J^*) 
\end{array}} \right) \\
& \quad \quad\quad\quad\quad\quad\quad\quad - \left( {\begin{array}{c c}
  1+2X & 2Y^*  \\ 
  - 2Y & -(1+ 2J X J^*) 
\end{array}} \right) \left( {\begin{array}{c c}
  h & -k^*   \\ 
  k  & - J h J^* 
\end{array}} \right) \\
&= 2\left( {\begin{array}{*{20}{c}}
  h X-Xh+k ^*Y-Y^* k & h Y^*+ Y^* J h J^* +  k^* JXJ^* + X k^* + k^* \\ 
  J h J^* Y + Y h + k X + J X J^* k + k &  J h X J^* - J X h J^* + k Y^* - Y k ^* 
\end{array}} \right) 
\end{align*}
and \eqref{eq:linear} follows immediately. 
\end{proof}

Now we are ready to prove
\begin{lemma} \label{lem:Tr-hX} Under conditions of Theorem \ref{thm:diag-Hamiltonian}, the operators $h^{1/2}X h^{1/2}$ and $k^*Y$ are trace class, where $X = V^*V$ and $Y = JV^*JU = JU^*J^*V$. 
\end{lemma}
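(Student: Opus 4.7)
The plan is to reduce the system from Lemma~\ref{lem:diag-eq} to a Sylvester-type equation on the single space $\gH$ and then test against a square-root quantity. First I introduce the surrogate operators $\widetilde{Y}:=J^{*}Y$ and $\widetilde{k}:=J^{*}k$, both viewed as linear operators on $\gH$. Since $k^{*}=J^{*}kJ^{*}$ implies $J^{*}k=k^{*}J$, we get $\widetilde{k}^{*}=\widetilde{k}$; and the Bogoliubov relation $V^{*}JU=U^{*}J^{*}V$ from \eqref{eq:relation-U-V} together with $Y=JU^{*}J^{*}V$ yields $\widetilde{Y}^{*}=\widetilde{Y}$. This self-adjointness gives the two crucial identities
\[
\widetilde{Y}^{2}\;=\;\widetilde{Y}^{*}\widetilde{Y}\;=\;Y^{*}Y\;=\;X+X^{2},\qquad
\widetilde{k}^{2}\;=\;J^{*}kJ^{*}k\;=\;k^{*}k,
\]
and therefore $\|h^{-1/2}\widetilde{k}\|_{\mathrm{HS}}^{2}=\Tr(\widetilde{k}^{2}h^{-1})=\Tr(k^{*}kh^{-1})=\|kh^{-1/2}\|_{\mathrm{HS}}^{2}<\infty$. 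Applying $J^{*}$ to the second equation of \eqref{eq:linear} and using $J^{*}J=1_{\gH}$ converts it into the Sylvester identity
\[
h\widetilde{Y}+\widetilde{Y}h\;=\;-\widetilde{k}-\widetilde{k}X-X\widetilde{k}
\]
on $\gH$.

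Set $T:=\|h^{1/2}\widetilde{Y}\|_{\mathrm{HS}}^{2}=\Tr(h\widetilde{Y}^{2})$. Multiplying the Sylvester identity by $\widetilde{Y}$ and taking the trace, cyclicity together with $\widetilde{Y}^{*}=\widetilde{Y}$ turns the left-hand side into $2T$, while the right-hand side is controlled via Cauchy--Schwarz after the factorization $\widetilde{k}=h^{1/2}\cdot(h^{-1/2}\widetilde{k})$ and the bound $\|X\widetilde{k}\widetilde{Y}\|_{\mathrm{tr}}\le\|X\|\cdot\|\widetilde{k}\widetilde{Y}\|_{\mathrm{tr}}\le\|X\|\cdot\|kh^{-1/2}\|_{\mathrm{HS}}\sqrt{T}$. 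The result is the quadratic inequality
\[
2T\;\le\;(1+2\|X\|)\,\|kh^{-1/2}\|_{\mathrm{HS}}\,\sqrt{T},
\]
which forces $T\le\tfrac14(1+2\|X\|)^{2}\|kh^{-1/2}\|_{\mathrm{HS}}^{2}<\infty$. Since $\widetilde{Y}^{2}\ge X$, this gives $\Tr(h^{1/2}Xh^{1/2})\le T<\infty$, so $h^{1/2}Xh^{1/2}$ is trace class. For the second assertion, $k^{*}=J^{*}kJ^{*}$ gives $k^{*}Y=\widetilde{k}\widetilde{Y}=(\widetilde{k}h^{-1/2})(h^{1/2}\widetilde{Y})$, a product of two Hilbert--Schmidt operators, with $\|k^{*}Y\|_{\mathrm{tr}}\le\|kh^{-1/2}\|_{\mathrm{HS}}\sqrt{T}<\infty$.

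The main obstacle is that the self-testing is a priori formal: if $T=\infty$ the quadratic inequality is vacuous, and the factorization $(\widetilde{k}h^{-1/2})(h^{1/2}\widetilde{Y})$ tacitly assumes $\widetilde{Y}$ maps into $D(h^{1/2})$. I would make the argument rigorous by introducing the spectral cutoff $P_{n}:=\1(h\le n)$, sandwiching the Sylvester identity with $P_{n}$ so that all operators involved become bounded, running the same Cauchy--Schwarz estimates on the finite surrogate $T_{n}:=\Tr(hP_{n}\widetilde{Y}^{2}P_{n})$ to derive the identical bound uniformly in $n$, and then passing to the limit by monotone convergence $T_{n}\nearrow T$. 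Once $T<\infty$ is obtained, the operator-theoretic factorizations of the last step are justified a posteriori.
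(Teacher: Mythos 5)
Your argument is, at bottom, the paper's estimate repackaged at the level of traces, and the core computation is sound: the paper also starts from the second equation of Lemma~\ref{lem:diag-eq}, the symmetry $Y^*=J^*YJ^*$ and the relation $Y^*Y=X+X^2$, and the same Cauchy--Schwarz with the splitting $k=(kh^{-1/2})h^{1/2}$. The difference is in how the a priori infinite quantity is avoided: instead of multiplying the operator identity by $\widetilde Y$ and tracing, the paper invokes Lemma~\ref{lem:diag-X-Y} to get a joint eigenbasis $\{u_n\}$ of $X$ and of the \emph{antilinear} operator $J^*Y$, pairs the equation with $u_n\oplus Ju_n$, and obtains the scalar identity $4\sqrt{\lambda_n+\lambda_n^2}\,\langle u_n,hu_n\rangle+2(1+2\lambda_n)\langle Ju_n,ku_n\rangle=0$, whence $\lambda_n\langle u_n,hu_n\rangle\le(\|X\|+\tfrac14)\|h^{-1/2}k^*Ju_n\|^2$ termwise; summing gives $\Tr(h^{1/2}Xh^{1/2})\le(\|X\|+\tfrac14)\Tr(kh^{-1}k^*)$ with no regularization needed. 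Your route buys a slightly more "coordinate-free" derivation (and an explicit quadratic inequality for $T$), at the price of exactly the finiteness/domain issue you flag.

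Two concrete points you still owe. First, $\widetilde Y=J^*Y$ and $\widetilde k=J^*k$ are \emph{antilinear}, not linear (the paper says this explicitly for $C=J^*Y$); your identities $\widetilde Y^*=\widetilde Y$, $\widetilde Y^2=X+X^2$, $\widetilde k^2=k^*k$ do survive, but cyclicity of the trace for antilinear factors reads $\Tr(CD)=\overline{\Tr(DC)}$, so the bookkeeping that "the left-hand side becomes $2T$" needs this remark (harmless here only because the traces involved are real and nonnegative). Second, the cutoff is the right idea precisely because $P_n=\1(h\le n)$ commutes with $h$, so sandwiching preserves the Sylvester form with $\widetilde Y_n:=P_n\widetilde YP_n$ (which is Hilbert--Schmidt since $V$ is); but the quantity the truncated identity controls uniformly is $\Tr(\widetilde Y_n h\widetilde Y_n)=\|h^{1/2}P_n\widetilde YP_n\|_{\rm HS}^2$, not your $T_n=\Tr(hP_n\widetilde Y^2P_n)$ (note the extra middle $P_n$), and the former is not obviously monotone in $n$; you need a Fatou/lower-semicontinuity step (e.g. test against finitely many eigenvectors of $X$ and then let $n\to\infty$) rather than "monotone convergence $T_n\nearrow T$". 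The cleanest repair is in fact to import Lemma~\ref{lem:diag-X-Y} and run your Cauchy--Schwarz eigenvector by eigenvector --- at which point you have reproduced the paper's proof. Your last step, writing $k^*Y$ as a product of the two Hilbert--Schmidt factors $J^*kh^{-1/2}$ and $h^{1/2}\widetilde Y$ (justified once $T<\infty$), coincides with the paper's conclusion.
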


\begin{proof} By Lemma \ref{lem:diag-X-Y} we can find an orthonormal basis $\{u_n\}_{n\ge 1}$ for $\gH$ and non-negative numbers $\{\lambda_n\}_{n\ge 1}$ such that
$$ X u_n =\lambda_n u_n, \quad Y u_n = J Y^* J u_n = \sqrt{\lambda_n+\lambda_n^2} J u_n, \quad \forall n\ge 1.
$$
Using the second equation in \eqref{eq:linear}, we obtain
\begin{align*} % \label{eq:diag-norm-1}
0&= \langle Ju_n, (J h J^* Y + Y h + k X + J X J^* k + k) u_n \rangle \nn\\
&= 4\sqrt{\lambda_n+\lambda_n^2} \langle u_n, h u_n \rangle + 2(1+2 \lambda_n) \langle Ju_n, ku_n\rangle, \quad \forall n\ge 1.
\end{align*}
By the Cauchy-Schwarz inequality
$$ |\langle Ju_n, ku_n\rangle|= |\langle h^{-1/2}k^*Ju_n, h^{1/2}u_n \rangle| \le \| h^{-1/2}k^*Ju_n \|\cdot \| h^{1/2}u_n\|$$ 
we find that 
$$ \lambda_n \langle u_n, h u_n \rangle \le \frac{(1+2\lambda_n)^2}{4+4\lambda_n} \| h^{-1/2}k^*Ju_n \|^2. $$
for all $n\ge 1$. Summing the latter estimate over $n\ge 1$ and using
$$ \frac{(1+2\lambda_n)^2}{4+4\lambda_n} \le \lambda_n +\frac{1}{4} \le \|X\|+\frac{1}{4}$$
we obtain
\begin{align*} \Tr(h^{1/2}X h^{1/2}) & =\Tr(X^{1/2}hX^{1/2})  =\sum_{n} \lambda_n \langle u_n, h u_n \rangle \\
&\le \left( \|X\|+\frac{1}{4} \right) \sum_n  \| h^{-1/2}k^*Ju_n \|^2 \\ 
&=  \left( \|X\|+\frac{1}{4} \right) \Tr(kh^{-1}k^*) <\infty. 
\end{align*}
Thus $h^{1/2}X h^{1/2}$ is a trace class operator. Moreover, by \eqref{eq:CC-VV-VVVV}, $Y^*Y=X+X^2 \le (1+\|X\|) X$, and hence $h^{1/2}Y^*Yh^{1/2}$ is also a trace class operator. Consequently, $k^*Y= J^*k h^{-1/2} \cdot h^{1/2} Y^* J$ is a trace class operator because $k h^{-1/2}$ and $h^{1/2} Y^*$ are Hilbert-Schmidt operators. 
\end{proof}

Now we are ready to finish

\begin{proof}[Proof of Theorem \ref{thm:diag-Hamiltonian}] From Lemma \ref{lem:bounded-below-H}, we know that the quadratic form $\bH$ defines a self-adjoint operator, still denoted by $\bH$, such that 
$$\inf \sigma(\bH) \ge -\frac{1}{2} \Tr(kh^{-1}k^*).$$
Let $\cV$ be as in Theorem \ref{thm:diag}. Let $\Psi$ be a normalized vector in $\mathcal{Q}$ defined in \eqref{eq:def-mcQ}. Consider $\Psi':=\bU_\cV^* \Psi$. From \eqref{eq:def-Gamma-AF*AF} and \eqref{eq:V-Uv-action}, it is straightforward to see that  
\begin{align}\label{eq:V*-Gamma-V} \left( {\begin{array}{*{20}{c}}
  \gamma_{\Psi'} & \alpha_{\Psi'}^*  \\ 
  \alpha_{\Psi'} & 1+ J \gamma_{\Psi'} J^* 
\end{array}} \right) = \cV^* \left( {\begin{array}{*{20}{c}}
  \gamma_{\Psi} & \alpha_{\Psi}^*  \\ 
  \alpha_{\Psi} & 1+ J \gamma_{\Psi} J^* 
\end{array}} \right) \cV.
\end{align}
Moreover, we have
\begin{align} \label{eq:V*-0001-V}
\cV^* 
\left( {\begin{array}{c c}
  0 & 0  \\ 
  0 & 1 
\end{array}} \right) \cV = \left( {\begin{array}{c c}
  X & Y^*  \\ 
  Y & 1+JXJ^* 
\end{array}} \right)
\end{align}
with $X = V^*V$ and $Y = JV^*JU = JU^*J^*V$. From \eqref{eq:V*-Gamma-V} and \eqref{eq:V*-0001-V}, we obtain 
\begin{align*}
\left( {\begin{array}{c c}
  \gamma_{\Psi'} - X & \alpha_{\Psi'}^* - Y^*  \\ 
  \alpha_{\Psi'}-Y & J (\gamma_{\Psi'}-X) J^* 
\end{array}} \right)  
&= \left( {\begin{array}{c c}
  \gamma_{\Psi'} & \alpha_{\Psi'}^*  \\ 
  \alpha_{\Psi'} & 1+ J \gamma_{\Psi'} J^* 
\end{array}} \right) -   \left( {\begin{array}{c c}
 X  & Y^*   \\ 
  Y & 1+ J X J^* \end{array}} \right) \\
& = \cV^* \left( {\begin{array}{c c}
  \gamma_{\Psi} & \alpha_{\Psi}^*  \\ 
  \alpha_{\Psi} & 1+ J \gamma_{\Psi} J^* 
\end{array}} \right) \cV - \cV^* 
\left( {\begin{array}{c c}
  0 & 0  \\ 
  0 & 1 
\end{array}} \right) \cV \\
&= \cV ^* \left( {\begin{array}{c c}
  \gamma_{\Psi} & \alpha_{\Psi}^*  \\ 
  \alpha_{\Psi} & J \gamma_{\Psi} J^* 
\end{array}} \right) \cV .
\end{align*}

Recall that $\gamma_\Psi$ and $\alpha_\Psi$ are finite-rank operators because $\Psi\in \mathcal{Q}$. Therefore,$\gamma_{\Psi'} - X$ and  $\alpha_{\Psi'}-Y$ are also finite-rank operators. Using the cyclicity of the trace we find that
\begin{align*} 
& \Tr(h^{1/2}(\gamma_{\Psi'}- X) h^{1/2}) + \Re \Tr(k^* (\alpha_{\Psi'}-Y)) \\
& = \frac{1}{2}  \Tr \left[ \cA  \left( {\begin{array}{c c}
  \gamma_{\Psi'} - X & \alpha_{\Psi'}^* - Y^*  \\ 
  \alpha_{\Psi'}-Y & J (\gamma_{\Psi'}-X) J^* 
\end{array}} \right)  \right] \\
& = \frac{1}{2}\Tr \left[ \cA  \cV^* \left( {\begin{array}{c c}
  \gamma_{\Psi} & \alpha_{\Psi}^* \\ 
  \alpha_{\Psi} & J \gamma_{\Psi}J^* 
\end{array}} \right) \cV  \right] 
 = \frac{1}{2}\Tr \left[ \cV \cA  \cV^* \left( {\begin{array}{c c}
  \gamma_{\Psi} & \alpha_{\Psi}^* \\ 
  \alpha_{\Psi} & J \gamma_{\Psi}J^* 
\end{array}} \right)  \right] \\
& = \frac{1}{2} \Tr \left[  \left( {\begin{array}{c c}
  \xi & 0 \\ 
  0 & J \xi  J^* 
\end{array}} \right) \left( {\begin{array}{c c}
  \gamma_{\Psi} & \alpha_{\Psi}^* \\ 
  \alpha_{\Psi} & J \gamma_{\Psi}J^* 
\end{array}} \right)  \right] = \Tr(\xi \gamma_\Psi) = \langle \Psi, \dGamma(\xi) \Psi \rangle.
\end{align*}
Thus by the quadratic form expression \eqref{eq:def-H-gamma-alpha}, we have
\begin{align*}
\langle \Psi, \bU_\cV \bH \bU_\cV^* \Psi \rangle &= \langle \bU_\cV^* \Psi, \bH \bU_\cV^* \Psi \rangle = \Tr(h^{1/2}\gamma_{\Psi'}h^{1/2})+ \Re \Tr(k^* \alpha_{\Psi'}) \\ 
&= \Tr(h^{1/2}(\gamma_{\Psi'}- X) h^{1/2}) + \Re \Tr(k^* (\alpha_{\Psi'}-Y)) \\
&\quad + \Tr(h^{1/2}Xh^{1/2})+ \Re \Tr(k^* Y) \\
&= \langle \Psi, \dGamma(\xi) \Psi \rangle + \Tr(h^{1/2}Xh^{1/2})+ \Re \Tr(k^* Y)
\end{align*}
for all $\Psi \in \mathcal{Q}$. Recall that we have proved in Lemma \ref{lem:Tr-hX} that $h^{1/2}Xh^{1/2}$ and $k^* Y$ are trace class operators. Hence,  
$$ \bU_\cV \bH \bU_\cV^* = \dGamma(\xi) + \Tr(h^{1/2}X h^{1/2})+ \Re \Tr(k^* Y). $$
Since $\dGamma(\xi)$ has a unique ground state $|0\rangle$ with the ground state energy $0$, we conclude that $\bH$ has a unique ground state $\Psi_0=\bU_\cV^* |0\rangle$ with the ground state energy 
$$ \inf\sigma(\bH) = \Tr(h^{1/2}Xh^{1/2})+ \Re \Tr(k^*Y).$$
Finally, using \eqref{eq:V*-Gamma-V} and \eqref{eq:V*-0001-V} we find that $\gamma_{\Psi_0}=X$ and $\alpha_{\Psi_0}=Y$. 
\end{proof}

\section{Appendix}

In this appendix we prove two abstract results in functional analysis. 

\begin{proof}[Proof of Lemma \ref{lem:e.v.anti-unitary}] Let us first consider the case when $\mathfrak{H}$ is finite dimensional. Since $C^2=C^*C$ is non-negative, it has an eigenvalue $\mu \ge 0$ with an eigenvector $u\ne 0$. Since $(C+\sqrt{\mu})(C-\sqrt{\mu})u=0$, we have either $Cu = \sqrt{\mu}u$ or $Cv = \sqrt{\mu}v$ with $v=i(C-\sqrt{\mu})u \ne 0$. Thus $C$ has an eigenvector $u_1\in \gH$ with a non-negative eigenvalue. Since $C=C^*$, it leaves the orthogonal subspace $\{u_1\}^\bot$ invariant. By the previous argument, $C$ has an eigenvector $u_2\in \{u_1\}^\bot$ with a non-negative eigenvalue. By iterating this process, we see that $C$ has an orthonormal eigenbasis for $\gH$ with non-negative eigenvalues.

Now we consider the case when $\mathfrak{H}$ is infinite dimensional. Assume that $C^2$ is compact. Then by the spectral theorem $C^2$ has an orthonormal eigenbasis for $\gH$. Note that $C$ leaves every eigenspace of $C^2$ invariant. Therefore, we can apply the result on the finite dimensional case to every eigenspace of $C^2$ except $\Ker(C^2)$. Moreover, $\Ker(C^2)=\Ker(C^*C)=\Ker(C)$. The desired conclusion follows immediately. 

Now assume that $C^2=1$. Take an arbitrary trace class operator $K>0$ on $\mathfrak{H}$ and consider the operator $K_1:=K+C^* K C$. Note that $K_1$ is also trace class and $K_1>0$. By the spectral theorem, $K_1$ admits an orthonormal eigenbasis for $\mathfrak{H}$ and every eigenspace of $K_1$ is finite dimensional. On the other hand, $C K_1 = K_1 C$ because $C=C^*$ and  $C^2=1$. Therefore, $C$ leaves every eigenspace of $K_1$ invariant. Then applying the result on the finite dimensional case for every eigenspace of $K_1$ we get the desired  conclusion.            
\end{proof}

\begin{proof}[Proof of Lemma \ref{lem:HS-trace} ] By taking the expectation of \eqref{eq:L-K} against a vector $u$ and optimizing over $\eps>0$, we obtain 
\begin{align} \label{eq:K-K1}
|\langle u, L u \rangle| \le  \langle u, K u \rangle^{1/2} \|u\|, \quad \forall u.
\end{align}

Let us show that $L$ is a compact operator. It suffices to show that $Lu_n\to 0$ strongly for every sequence $u_n\wto 0$ weakly. We first consider the case $L \ge 0$. Since $u_n\wto 0$ weakly, $\|u_n\|$ is bounded by the principle of uniform boundedness. Moreover, $\sqrt{K}u_n\to 0$ strongly because $\sqrt{K}$ is compact. Therefore, from \eqref{eq:K-K1} we get
$$
\|Lu_n\|^2 \le \|\sqrt{L}\|^2 \cdot \|\sqrt{L}u_n\|^2 \le \|\sqrt{L} \|^2 \cdot \|\sqrt{K}u_n\|\cdot \|u_n\| \to 0.
$$
In the general case, when $L$ is not necessarily positive, from \eqref{eq:L-K} we have
\begin{align*}
2L \1(L\ge 0) \le \eps^{-1} \1(L\ge 0) K \1(L\ge 0) + \eps,\quad \forall \eps>0.  
\end{align*}
Since $\1(L\ge 0) K \1(L\ge 0)$ is trace class and $L \1(L\ge 0) \ge 0$, we conclude that $L \1(L\ge 0)$ is compact. Similarly, $L \1(L<0)$ is compact, and hence $L=L\1(L\ge 0) + L \1(L< 0)$ is compact.  

Finally, since $L=L^*$ is a compact operator, it has an orthonormal eigenbasis $\{v_n\}$ with real eigenvalues. Using \eqref{eq:K-K1}  again we obtain
$$ \Tr(L^2)= \sum_n |\langle v_n, L v_n \rangle|^2 \le \sum_n  \langle v_n, K v_n \rangle = \Tr(K).$$
\end{proof}

\bibliographystyle{siam}

\end{document}